\newcommand{\qw}{{\bf w}}
\newcommand{\qW}{{\bf W}}
\newcommand{\qh}{{\bf h}}
\newtheorem{thm}{Theorem}
\newtheorem{lemma}{Lemma}
\newcommand*\rc{\color[rgb]{0, 0, 0}}
\newcommand*\rca{\color[rgb]{0, 0, 0}}
 \newcounter{subeqn} \renewcommand{\thesubeqn}{\theequation\alph{subeqn}}%
 \newcommand{\subeqn}{%
   \refstepcounter{subeqn}
   \tag{\thesubeqn}
 }
\begin{document}
\title{\rc Secrecy Analysis on Network Coding in Bidirectional Multibeam Satellite Communications}
\author{
    {\IEEEauthorblockN{Ashkan Kalantari~\IEEEmembership{Student Member,~IEEE}, Gan Zheng~\IEEEmembership{Senior Member,~IEEE}, \\ Zhen Gao~\IEEEmembership{Member,~IEEE}, Zhu Han~\IEEEmembership{Fellow,~IEEE}, 
		\\
		and Bj\"{o}rn Ottersten},~\IEEEmembership{Fellow,~IEEE}
    \\
				\thanks{Copyright (c) 2013 IEEE. Personal use of this material is permitted. However, permission to use this material for any other purposes must be obtained from the IEEE by sending a request to pubs-permissions@ieee.org}
				
		\thanks{This work was supported by the National Research Fund (FNR) of Luxembourg under AFR grant  for the project 
		``Physical Layer Security in Satellite Communications (ref. 5798109)'', SeMIGod, the National Natural Science Foundation 
		of US (Grant No. CCF-1456921, CMMI-1434789, CNS-1443917, and ECCS-1405121), and the National Natural Science Foundation of China (Grant No. 61428101).}


 \thanks{Ashkan Kalantari and Bj\"{o}rn Ottersten are with the Interdisciplinary Centre for Security, Reliability and Trust (SnT),
    The University of Luxembourg, 4 rue Alphonse Weicker, L-2721 Luxembourg-Kirchberg, Luxembourg, E-mails: ashkan.kalantari@uni.lu, bjorn.ottersten@uni.lu.}
		
		    \thanks {Gan Zheng is with the School of Computer Science and Electronic Engineering, University of Essex, UK, E-mail: ganzheng@essex.ac.uk. He is also affiliated with the Interdisciplinary Centre for Security, Reliability and Trust (SnT), University of Luxembourg, Luxembourg.}

    \thanks {Zhen Gao is with School of Electronic Information Engineering, Tianjin University, Tianjin 300072, China, Email: zgao@tju.edu.cn.}

    \thanks {Zhu Han is with the Electrical and Computer Engineering Department, University of Houston, USA, E-mail: zhan2@uh.edu.}
}}
\maketitle
\begin{abstract}
Network coding is an efficient means to improve the spectrum efficiency of satellite communications. However, its resilience to eavesdropping attacks is not well understood. This paper studies the confidentiality issue in a bidirectional satellite network consisting of two mobile users who want to exchange message via a multibeam satellite  using the XOR network coding protocol. We aim to maximize the sum secrecy rate by designing the optimal beamforming vector along with optimizing the return and forward link time allocation. The problem is non-convex, and we find its optimal solution using semidefinite programming together with a 1-D search. For comparison, we also solve the sum secrecy rate maximization problem for a conventional reference scheme without using network coding. Simulation results using realistic system parameters demonstrate that the bidirectional scheme using network coding provides considerably higher secrecy rate compared to that of the conventional scheme.
\end{abstract}
\begin{IEEEkeywords}
Physical layer security, network coding, bidirectional  satellite
communications, secrecy rate, semidefinite programming.
\end{IEEEkeywords}
\section{Introduction}
Satellite communications (SATCOM) is getting  more and more
integrated into communication networks to compliment the current
terrestrial communication systems. Satellite services have to
support increasing demands for data transfer.  To realize
bidirectional satellite communications, traditionally orthogonal
resources either in frequency or time domain should be used to avoid
interference between users. To save the precious wireless resources,
network coding has been used in this work as an efficient protocol to
exchange information between two mobile satellite users. The basic
principle is that the received information from users are combined
on the satellite or gateway (GW), and then the mixed signal is broadcast to users at the
same time and using the same frequency. Because each user can
subtract its own message, it can easily decode the message from the
other user.

However, due to the broadcast nature and immense area coverage,
satellite communications systems, e.g., in military and commercial
applications, are vulnerable to security attacks such as
eavesdropping. Currently, security in SATCOM is achieved at upper
layers by means of encryption such as the Advanced Encryption
Standard~\cite{Chowdhury:2005,Cruickshank:2005}. Nevertheless,
traditional security is based on the assumption of limited
computational capability of the malicious nodes, and thus there exists
the risk that a malicious node can successfully break an encryption,
and get access to sensitive satellite data
\cite{sklavos:Cryptography:2007}. In contrast to the upper layer
encryption techniques, recently there has been significant interest
in securing wireless communications at the physical layer using an
information-theoretic approach named ``\emph{secrecy rate}''~\cite{Wyner:1975}. The main advantage of this approach
is that the malicious nodes cannot even get access to protected
information regardless of their computational capabilities.

While network coding can greatly improve the system throughput,
whether it is more secure than the conventional scheme, which does not 
use network coding, is largely unknown in SATCOM. In this work, we
will leverage the physical layer security approach to address the
confidentiality issue in bidirectional SATCOM using the principle of
network coding. Below, we provide an overview on the applications of
network coding to SATCOM and the related work in the physical layer
security literature.

\subsection{Literature Review}
\subsubsection{Network coding related works}
Network coding technique, first introduced in~\cite{Rudolf:2000},
can considerably reduce delay,  processing complexity and power
consumption, and can significantly increase the data rate and
robustness~\cite{Bassoli:2013}. In the popular XOR network coding
scheme, the received signals at an intermediate node are first
decoded into bit streams, and then XOR is applied on the bit streams
to combine them. The processed bits are re-encoded and then
 broadcast. Utilization of network coding has been studied in both
terrestrial and satellite networks. The authors
in~\cite{Hammerstrom:2007} apply superposition coding and XOR
network coding to a bidirectional terrestrial relay network. A
multi-group multi-way terrestrial relay network is considered
in~\cite{Amah:2011} where superposition coding and XOR network
coding are investigated and compared to each other. Network coding
can also considerably improve the spectral efficiency in
bidirectional SATCOM in which two mobile users exchange information
via the satellite. The work in~\cite{Rossetto:2010} compares the
amplify-and-forward (AF) method with the XOR network coding
scheme in a satellite scenario. A joint delay and packet drop rate
control protocol without the knowledge of lost packets for mobile
satellite using network coding is studied in~\cite{Sorour:2010}.
In~\cite{Rossetto:2011}, buffers are designed for satellites when
the network coding scheme is employed. Random linear network coding
is used in~\cite{Vieira:2012:Load} to minimize the packet delivery
time. Satellite beam switching for mobile users is tackled
in~\cite{Vieira:2012:Handover} where  the network coding scheme
increases the robustness in delivery of the packets when mobile
terminals move from beam to beam. The XOR network coding protocol is
demonstrated in a satellite test bed in~\cite{Bischl:2011}.

\subsubsection{Physical layer security related works}
Wyner in~\cite{Wyner:1975} first showed that secure transmission is
possible for the legitimate user given the eavesdropper receives
noisier data compared to the legitimate receiver. Inspired by
Wyner's work,~\cite{Leung-Yan-Cheong:1978} extended the idea of physical layer secrecy rate from
the discrete memoryless wiretap channel to Gaussian wiretap channel.
The Wyner's wiretap channel was generalized in~\cite{Csiszar:1978}
to the broadcast channel. After the seminal works done
in~\cite{Wyner:1975,Leung-Yan-Cheong:1978,Csiszar:1978}, there have
been substantial amount of works in physical layer secrecy. Here, we
only review those most relevant to network coding and bidirectional
communications. The authors in~\cite{Jingchao:2011} consider a relay
utilizing the XOR network coding protocol where joint relay and
jammer selection is done to enhance the secrecy rate. {\rc A
bidirectional AF relay network with multiple-antenna nodes is
considered in~\cite{Mukherjee:2010} where the relay beamforming
vector is designed by the waterfilling method to improve the secrecy
rate}. The authors in~\cite{Zhiguo:2011} consider random relay
selection in a bidirectional network in which the relay performs 
both data transmission and jamming the eavesdropper at the same time to increase the secrecy.
 The work in~\cite{Zhiguo:2012} performs selection over AF relays and
 jammers in a bidirectional network for the single-antenna case, and
 precoding in the multiple-antenna case to enhance the secrecy. To
maximize the secrecy in a bidirectional network, the authors
in~\cite{Jingchao:2012} consider the location and distribution of
nodes while joint relay and jammer selection is performed.
Distributed beamforming along with artificial noise and beamforming
is studied in~\cite{Hui-Ming:2012} for a bidirectional AF relay
network. The work in~\cite{Hui-ming:2013} designs the distributed
beamforming weights for a bidirectional network where one
intermediate node acts as a jammer. In contrast to the terrestrial
literature, there are very few works in physical layer security for
SATCOM. The problem of minimizing the transmit power on a multibeam
satellite  while satisfying a minimum per user secrecy rate is
studied in~\cite{Lei:2011}. Iterative algorithms are used to jointly
optimize the transmission power and the beamforming vector by
perfectly nulling the received signal at the eavesdropper. Both
optimal and suboptimal solutions are developed in~\cite{Gan:2012}
where   the use of artificial noise is also studied.

Despite the physical layer security and network coding works in the
terrestrial and SATCOM scenarios, some unaddressed issue are left. 
In~\cite{Hammerstrom:2007}, only downlink bottlenecks
are considered when designing the beamforming weights for the XOR
network coding case. The uplink bottlenecks also need to be considered
when optimizing the uplink-downlink time allocation.
In~\cite{Amah:2011}, the authors consider the decoding-re-encoding
and designing the beamforming vector separately. {\rc The works
in~\cite{Zhiguo:2011,Jingchao:2012}
consider single-antenna relay where the AF protocol is used in a
bidirectional network. The authors in~\cite{Mukherjee:2010,Hui-Ming:2012,Hui-ming:2013} use the analog network coding protocol in a two-way relay network to facilitate secure information exchange between two users.} Furthermore, the mentioned terrestrial works in physical layer
security for bidirectional communications assume one eavesdropper in
the environment. The works in~\cite{Lei:2011,Gan:2012} design the
beamforming weights for unidirectional service for fixed users in
the forward link (FL).

\subsection{Our Contribution}
In this work, we study the network coding based bidirectional SATCOM
in which two mobile users  exchange data via a transparent multibeam
satellite in the presence of two eavesdroppers. There is an
eavesdropper present for each user who overhears the bidirectional
communications. The users employ omnidirectional antennas and the
communication is prone to eavesdropping in both the return link (RL)
and FL. In the RL, two users send signals using two orthogonal
frequency channels; the signals collected by the satellite are
passed to the GW, where they are decoded, XOR-ed and 
then the produced stream is re-encoded. This combined stream is {\rca multiplied by the beamforming vector which 
contains the designed weight of each feed. Consequently, each element of the resultant vector is transmitted 
to the satellite using the feeder link. Each element which includes both the feed weight and the 
data signal is applied to the corresponding feed to adjust the beams} for broadcasting to both users simultaneously
in the FL. This scheme is more power-efficient than the conventional
method where network coding in not utilized and the power is splitted into two
data streams. This benefit is extremely vital for SATCOM because of
the limited on-board power.

Our main contributions in this work are summarized below to
differentiate it from the prior work:
\begin{enumerate}
  \item We incorporate XOR network coding into SATCOM in order to
  enable both efficient and secure
  bidirectional data exchange. 

  \item The end-to-end sum secrecy rate is first derived, and then maximized by designing the optimal beamforming vector and the RL and FL time allocation. The optimization
  problem regarding the beamforming vector is solved using semi-definite programming (SDP) along with 1-D search.

  \item We provide comprehensive  simulation results to demonstrate the advantage of the bidirectional scheme over the conventional scheme using realistic SATCOM parameters.
\end{enumerate}

The remainder of the paper is organized as follows. In
Section~\ref{Sec:System model}, we introduce the SATCOM network
topology as well as deriving the signal model and defining the secrecy rates.
The  problems for maximizing the sum secrecy rate are defined and
solved in Section~\ref{Sec:Problem Formulation}. In
Section~\ref{Sec:Simulation Results}, numerical results  are
presented. The conclusion is drawn in Section~\ref{sec:con}.

\hspace{-0.3cm}\emph{Notation}: Upper-case and lower-case bold-faced
letters are used to denote matrices and column vectors,
respectively. Superscripts $(\cdot)^T$ and $(\cdot)^H$ represent
transpose and Hermitian operators, respectively. ${{\bf{I}}_{N
\times N}}$ denotes an $N$ by $N$ identity matrix. $\mathcal{CN}(
\bf{m},\bf {K})$ denotes the complex Gaussian distribution with mean
vector $\bf{m}$ and covariance matrix $\bf {K}$. $\lambda _{\max
}(\bf A, \bf B)$ is the maximum eigenvalue of the matrix pencil
$(\bf A, \bf B)$. $\bf{A}\succeq \bf{0}$ means that the Hermitian
matrix $\bf{A}$ is positive semidefinite.  $\|\cdot\|$ is the
Frobenius norm and $|\cdot|$ represents the absolute value of a
scalar.
\section{System Model}\label{Sec:System model}
Consider a satellite communication system comprised of two users
denoted by $U_1$ and $U_2$ who exchange information with each other,
one multibeam transparent satellite denoted by $S$, one GW, two
eavesdroppers denoted by $E_1$ and $E_2$ as depicted in
Fig.~\ref{Fig:System model}. Users are located in different beams of
the satellite, and they transmit the RL signals using different
frequency channels simultaneously. We assume that each user and each
eavesdropper is equipped with a single omni-directional antenna.
Because of the long distance between the users, there is no direct
link between them; furthermore, eavesdroppers cannot cooperate and
$E_i$ can only overhear $U_i$ for $i=1,2$. Contemporary {\rc orbiting} satellites 
{\rc such as \emph{ICO}, \emph{SkyTerra}, and \emph{Thuraya}} have limited power, here defined as $P_S$, and some of
them do not have the on-board processing ability to decode the
received messages {or \rca perform on-board beamforming,} so they have to forward the received signal to
the GW to get it processed{\rc~\cite{Sunderland:2002,Devillers:2011:joint,Khan:2012}}. {\rc Using the GW to process the signal and 
designing the feed weights is referred to as the ground-based beamforming technique. The 
ground-based beamforming technique is perceived as the most convenient and economical 
approach~\cite{Khan:2012}}. {\rca In this paper, we consider a commercial satellite without digital processing ability and 
follow the ground-based beamforming paradigm.}

{\rc In our satellite network model, we assume that the eavesdropper is a regular user which is part of the network. However, it is considered as an unintended user, potential eavesdropper, which the information needs to be kept secret from it. Due to the fact that the eavesdropper is part of the network, it is possible to estimate the channels to it. Hence, similar to the works~\cite{Tekin:2008,Tie:2009,Dong:2010,Yuksel:2011,Hang:2013}, we assume that the eavesdropper's channel state information (CSI) is 
known. Based on the mentioned assumption, we assume that the users and eavesdropper know all the CSIs}. Further, all communication
channels are known and fixed during the period of communication. {\rca It is worth mentioning that in the secrecy rate analysis of 
XOR network coding, only the CSI of the eavesdroppers in the RL is required. Although we assume the availability of the eavesdropper's CSI, there are methods such as null-space 
artificial noise transmission~\cite{Goel:2008}, random beamforming~\cite{Qiaolong:2013,Chorti:2012:main,Xiaohua:2007}, or effective channel coding design to 
strengthen the cryptography~\cite{Harrison:2009} in order to sustain secrecy without having the knowledge of the eavesdropper's CSI. Another alternative can be using the statistical knowledge of the eavesdropper's CSI in order to improve the secrecy~\cite{Jiangyuan:2011,Jiangyuan:G:2011,Rezki:2014,Girnyk:2014}. Also, the interference alignment technique can be used along with statistical knowledge of the eavesdropper's CSI to enhance the secrecy~\cite{Koyluoglu:2011}. In the situations when the geographical area of the eavesdropper is known, the worst-case scenario can be considered. In this scenario, the best CSI from the user to the eavesdropper's area is considered for the design. One possible example for the worst-case scenario can be when the occupied zone by the enemy is known. This example can be one of the applications of this paper.} 

To acquire the RL channel state information (CSI) at the GW, the users
send the pilot signals along with the data toward the satellite. For
the FL CSI, the GW sends pilots to the users. Afterwards, the estimated CSI
by the users is sent back to the GW.
Therefore, getting the FL CSI takes more time compared to the RL
CSI~\cite{Byoung-Gi:2006}. The GWs are equipped with advanced transceivers and antennas and because of this reason, the communication link between the GW and the 
satellite (feeder link) is modeled as an ideal link. Hence, {\rc similar to the works~\cite{Cottatellucci:2006,Devillers:2011:joint,Christopoulos:2012,arnau:2012,Joroughi:2013} which 
are carried out in the satellite communications literature,} we assume that the channel
between the satellite and the GW, which is referred to as the feeder link, is ideal with abundant bandwidth.

The complete communication phases of the network coding
based scheme are summarized in Table~\ref{table:Phases}. The
conventional scheme without using network coding is also included
for comparison and details are given in Section~\ref{con:satcom}. The first two
phases for the RL are the same for both schemes while the main
difference lies in the FL transmission. In the conventional scheme,
signals are sent in different time slots for each user in the FL, so
this scheme has less available transmission time for each user. In
the bidirectional scheme, signal streams are combined, and then
  sent in the FL using the XOR network coding protocol, therefore, 
	the spectral efficiency is significantly improved compared
to the conventional scheme.

\begin{figure}[t]
  \centering
  \includegraphics[width=8.5cm]{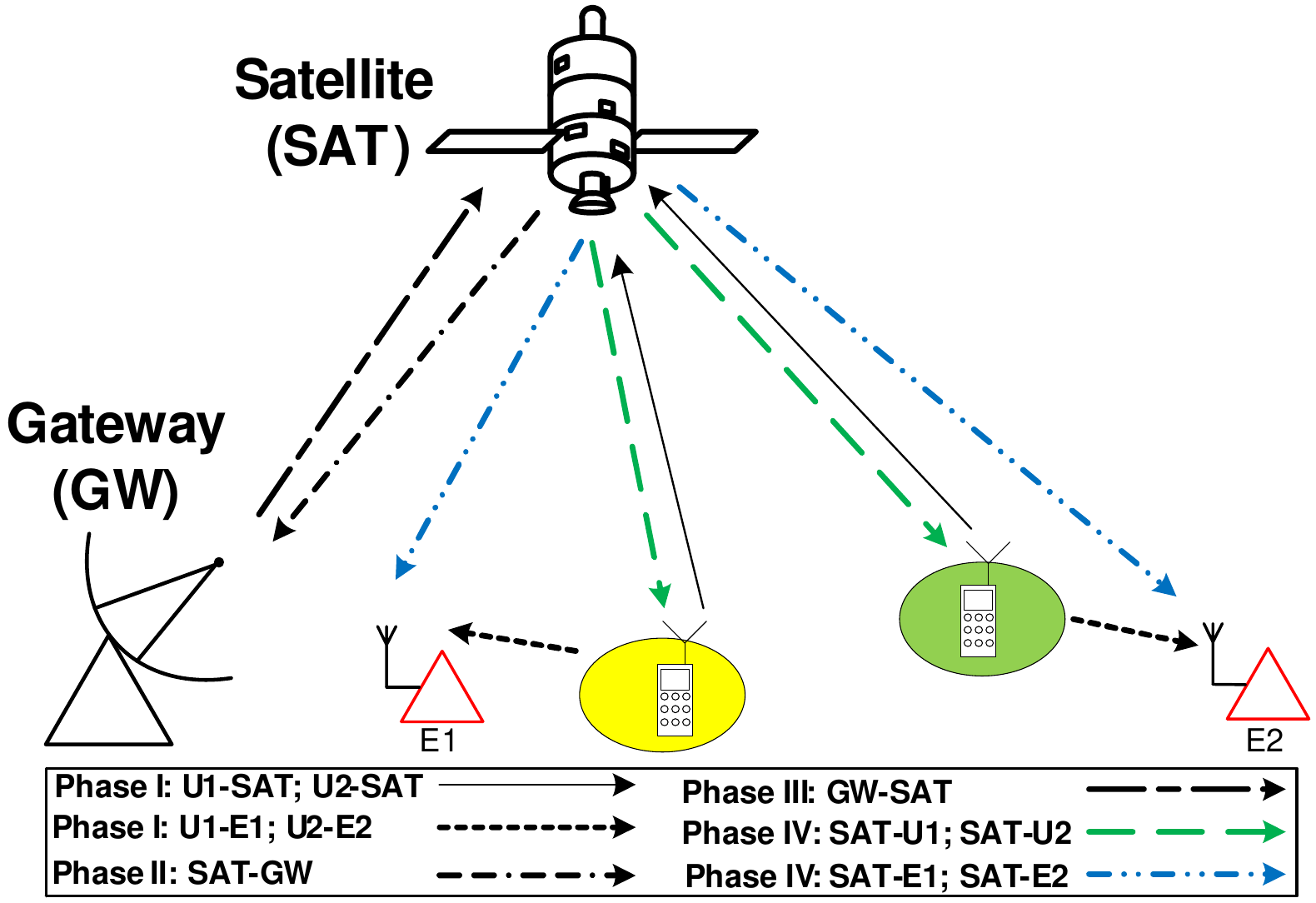}
  \caption{Bidirectional satellite communication network.}
  \label{Fig:System model}
\end{figure}
\begin{table}[]
 \centering
\caption{Communication stages for the XOR network coding and the conventional schemes.}
\label{table:Phases}
\begin{tabular}{|p{4cm}|p{4cm}|}
\hline
 \small Conventional reference scheme  &  \small XOR network coding scheme     \\ \hline
\multicolumn{2}{|p{8cm}|}{\small Phase I: $U_1$ and $U_2$ simultaneously send their signals, $s_1$ and $s_2$, to the satellite
while they are overheard by $E_1$ and $E_2$, respectively.} \\ \hline

\multicolumn{2}{|p{8cm}|}{\small Phase II: The satellite passes the received signal to the gateway for
processing. At the gateway, the users’ signals are separately decoded.} \\ \hline

\small Phase III: The intended signal for $U_1$, decoded $s_2$, is re-encoded at the gateway {\rca and the corresponding 
feed weights are designed. Then, the feed weights multiplied by the data signal} are sent to the satellite.                                       & \multirow{2}{*}{\begin{tabular}[c]{p{3.9cm}} 
\small Phase III: The gateway applies XOR operation on the decoded streams from $s_1$ and $s_2$ to create a merged stream of bits {\rca and the 
feed weights are designed. Then, the feed weights multiplied by the data signal are sent to the satellite. } \end{tabular}}  \\ \cline{1-1}

\small Phase IV: The satellite passes the re-encoded signal through the corresponding beam to $U_1$ while $E_1$ is listening to it.               &      \\ \hline

\small Phase V: The intended signal for $U_2$, decoded $s_1$, is re-encoded at the gateway {\rca and the corresponding 
feed weights are designed. Then, the feed weights multiplied by the data signal} are sent back to the satellite.                                  & \multirow{2}{*}{\begin{tabular}[c]{p{3.9cm}} 
\small Phase IV: The satellite broadcasts the merged stream toward the users through the corresponding beams which is wiretapped by both $E_1$ and $E_2$. \end{tabular}}   \\ \cline{1-1}

\small Phase VI: The satellite passes the re-encoded signal through the corresponding beam to $U_2$ while $E_2$ is listening to it.               &                 \\ \hline
\end{tabular}
\end{table} 
\subsection{Network coding based bidirectional SATCOM}\label{Sec:Signal model Bi}
\subsubsection{Signal model}\label{Sec:Signal model Bi sub}
In this case, the whole communication takes place in four phases. In
Phase I, both users transmit signals using different frequencies
simultaneously. The signals received at the satellite and the eavesdroppers are
\begin{align}
&{{\bf{y}}_{S_1}} =  \sqrt {P_{U_1}} {\qh_{{U_1},S}}\,{s_1} + {{\bf{n}}_{S_1}},
\label{eqn:SAT rec sig PhI U1}
\\
&{{\bf{y}}_{S_2}} =  \sqrt {P_{U_2}} {\qh_{{U_2},S}}\,{s_2}+  {{\bf{n}}_{S_2}},
\label{eqn:SAT rec sig PhI U2}
\\
&y_{{E_1}}^{RL} =  \sqrt {P_{U_1}} {h_{{U_1},{E_1}}}{s_1} + n_{{E_1}},
\label{eqn:E1 rec sig PhI}
\\
&y_{{E_2}}^{RL} =  \sqrt {P_{U_2}} {h_{{U_2},{E_2}}}{s_2} + n_{{E_2}},
\label{eqn:E2 rec sig PhI}
\end{align}
where $P_{U_i}$ is the transmitted power by the users for $i=1, 2$,
$h$ and $\bf{h}$ represent the user-eavesdropper and user-satellite
channels, respectively, and the corresponding source and destination
are denoted by the subscript. The channel for the satellite is a 
$N_S \times 1$ vector where $N_S$ is the number of the satellite 
feeds. Additive white Gaussian noises (AWGN) are denoted by
$n$ and $\textbf{n}$ with $n\sim\mathcal{CN}(0,\sigma^2)$ and
$\bf{n}\sim\mathcal{CN}(0,\sigma^2{{\bf{I}}}_{N_S\times N_S})$,
respectively. We consider the noise power for users, satellite and
eavesdroppers as $KTB$, where $K$ is the Boltzman's constant which
is $-226.8$ dBW/K/Hz, $T$ is the on-board temperature and $B$ is the
carrier bandwidth.   We assume that $s_1$ and $s_2$ are independent
and identically distributed (i.i.d.) Gaussian random source signals
with zero mean and unit variance. For convenience, we use the noise
variance, $\sigma^2$, instead of $KTB$ and omit the bandwidth, $B$,
in the rate expressions throughout the paper. Note that we consider
different temperatures for ground nodes and the satellite. The
satellite forwards the received signal to the GW {\rca using the feeder link} in Phase II and
thanks to the ideal link between the satellite and the GW, the same
signals as~\eqref{eqn:SAT rec sig PhI U1} and~\eqref{eqn:SAT rec sig
PhI U2} are present at the GW to be processed.

At the GW, the received signal is filtered and users' data are
separated and decoded into two bit streams denoted by $x_1$ and
$x_2$, respectively. The GW applies the bit-wise XOR algebraic
operation to the decoded bit streams of the users to get the
combined stream
\begin{align}
{x_{GW}} = {x_1} \oplus {x_2}.
\label{eqn:XOR GW Process}
\end{align}
Note that before applying the XOR network coding, the GW uses
zero-padding to {\rc add zeros to the shorter bit stream in order to} 
make equal length {\rc bit streams} out of the {\rc two} different bit
streams sent by the users{\rc~\cite{Katti:2008,Xuehua:2013}.} In Phase III, $x_{GW}$ is encoded into
$s_{GW}$ with unit power, and then multiplied by the beamforming vector,
$\bf{w}$. {\rca Using the ideal feeder link, each element, $w_is_{GW}(t)$, of the produced vector, ${\bf{w}}s_{GW}$, at 
the GW which both includes the feed weight, $w_i$, and the data signal, $s_{GW}$, is transmitted from the GW to the satellite.} Since the codebook used at the
GW to encode $x_{GW}$ can be different in the XOR network coding
scheme, the RL and FL transmission times are generally different for
the XOR network coding. This enables optimum RL and FL time
allocation for the XOR network coding. The received signal by
satellite is denoted as ${{\bf{s}}_S} = {\rca {\bf{H}}_{GW,S}} \qw{s_{GW}}$. {\rca The model ${\bf{s}}_S = {\bf{H}}_{GW,S} {\bf{w}} s_{GW}$ 
encapsulates the process of transmitting each element of the vector 
${\bf{w}}s_{GW}$ from the GW to the satellite. Since the feeder link is considered to be ideal, ${\bf{H}}_{GW,S}$ is a $N_S \times N_S$ identity 
matrix.} Finally, in Phase IV, {\rca each feed weight designed at the gateway, which includes the data signal, 
is applied to the corresponding feed at the satellite. Hence, the beams are adjusted} 
and the signal ${{\bf{s}}_S}$ is broadcast {\rca through the antennas}. 
The received signals at two users are, respectively,
\begin{align}
y_{{U_1}}^{FL_{XOR}} &= {\qh^T_{S,{U_1}}}{{\bf{s}}_S} + {n_{{U_1}}},
\label{eqn:XOR U1 downlink sig}
\\
y_{{U_2}}^{FL_{XOR}} &= {\qh^T_{S,{U_2}}}{{\bf{s}}_S} +
{n_{{U_2}}}. \label{eqn:XOR U2 downlink sig}
\end{align}
 Similarly, the  received signals at the eavesdroppers  in Phase IV
 are, respectively,
\begin{align}
y_{{E_1}}^{FL_{XOR}} & =  {\qh^T_{S,{E_1}}}{{\bf{s}}_S} + n_{{E_1}},
\label{eqn:XOR E1 downlink rec sig}
\\
y_{{E_2}}^{FL_{XOR}} & =  {\qh^T_{S,{E_2}}}{{\bf{s}}_S} + n_{{E_2}}.
\label{eqn:XOR E2 downlink rec sig}
\end{align}
In the following, we shall define the sum secrecy rate. We first
introduce the users' rates and eavesdroppers' channel capacities.
\subsubsection{Users' RL rates}\label{Subsec:Uplink Rates}
Consider $t_1$ and $t_2$ for the RL (Phase I) and FL (Phase IV) transmission
time, respectively. In Phase I, we can characterize the RL rates $(R^{RL}_{U_1},R^{RL}_{U_2})$
by the following equations~\cite[Chapter~5]{Tse:WirelessComm}:
\begin{align}
& R_{{U_1}}^{RL} \le I_{{U_1}}^{RL} = {t_1}\log \left( {1 + \frac{{{P_{{U_1}}}{{\left\| {{{\bf{h}}_{{U_{1,S}}}}} \right\|}^2}}}{{\sigma _S^2}}} \right)
\label{eqn:U1 Uplink Rate}
\\
& R_{{U_2}}^{RL} \le I_{{U_2}}^{RL} = {t_1}\log \left( {1 + \frac{{{P_{{U_2}}}{{\left\| {{{\bf{h}}_{{U_{2,S}}}}} \right\|}^2}}}{{\sigma _S^2}}} \right),
\label{eqn:U2 Uplink Rate}
\end{align}
where $I$ denotes channel capacity or the maximum supported rate and $R$
is the maximum achievable rate.
\subsubsection{Users' FL rates}\label{Para:XOR User Rate}
After receiving the FL signal, users decode ${{\bf{s}}_S}$. As each
user knows its own transmitted bits, it can use the XOR operation to
retrieve the intended bits. Subsequently, using~\eqref{eqn:XOR U1
downlink sig} and~\eqref{eqn:XOR U2 downlink sig}, the FL rates can
be expressed as
\begin{align}
& R^{FL_{XOR}} = \min \left\{ {I_{{U_1}}^{FL_{XOR}},I_{{U_2}}^{FL_{XOR}}} \right\},
\label{eqn:XOR downlink rate}
\\
& I_{U_1}^{FL_{XOR}} =  {t_2}\log \left( {1 +\frac{|{\bf{h}}_{S,{U_1}}^T \bf{w} |^2}{\sigma_{{U_1}}^2}} \right),
\label{eqn:XOR U1 downlink rate region}
\\
& I_{U_2}^{FL_{XOR}}
=  {t_2}\log \left( {1 +\frac{|{\bf{h}}_{S,{U_2}}^T \bf{w} |^2}{\sigma_{{U_2}}^2}} \right).
\label{eqn:XOR U2 downlink rate region}
\end{align}
 Since the data for both users have gone through a bit-wise XOR
operation at the GW and a combined signal is broadcast, the GW has
to adjust the combined signal's data rate to match both users'
channel capacities. This rate should be equal to the minimum FL
channel rate between the satellite and the users in Phase IV before
sending ${{\bf{s}}_S}$ to the satellite.
\subsubsection{Eavesdroppers' channel capacities}\label{Para:XOR Eve rate}
Using~\eqref{eqn:E1 rec sig PhI} and~\eqref{eqn:XOR E1 downlink rec
sig}, the channel capacity from $U_1$ to $E_1$, $I_{E_1}^{RL}$, and
from satellite to $E_1$, $I_{E_1}^{F{L_{XOR}}}$, can be expressed,
respectively, as
\begin{align}
&I_{E_1}^{RL} = {t_1}\log \left( {1 + \frac{  P_{U_1} {\left| h_{ U_1 , E_1 } \right|}^2} {\sigma _{{E_1}}^2}} \right),
\label{eqn:XOR E1 RL rate}
\\
&I_{E_1}^{FL_{XOR}}
=  {t_2}\log \left( {1 +\frac{|\qh_{S,{E_1}}^T {\bf{w}} |^2}{\sigma_{{E_1}}^2}} \right).
\label{eqn:XOR S-E1 FL rate}
\end{align}
{\rca The channel capacities for $E_2$ can be derived in a similar way.}
\subsubsection{Secrecy rate definition} \label{sec:sec rate def bi}
First, we derive the the secrecy rate for the RLs and FLs, {\rca and then the end-to-end secrecy rate.} 
{\rca In~\cite{Oggier:2008}, the result of~\cite{Wyner:1975} is extended to fading channels with multiple-antenna transmitter, receiver, and eavesdropper}. {\rca Using the special case of the result in~\cite{Oggier:2008} for single-antenna transmitter, multiple-antenna receiver, and single-antenna eavesdropper along with employing}~\eqref{eqn:U1 Uplink Rate} and~\eqref{eqn:XOR E1 RL rate}, the
secrecy rate for the RL of $U_1$ is calculated as
\begin{align}
{\rca SR_{{U_1}}^{RL}} = I_{{U_1}}^{RL} - I_{{E_1}}^{RL},
\label{eqn:XOR secrecy rate RL U1}
\end{align}
{\rca
where the notation ``SR'' means ``secrecy rate''.}

{\rca To calculate the secrecy rate in the FL, first, we derive the information that $E_1$ can recover 
during the RL transmission in Lemma~\ref{lem:E1 Ph IV}.}

{\rca  \begin{lemma}
Independent of getting a positive or zero secrecy rate defined for the RL of $U_1$ in~\eqref{eqn:XOR secrecy rate RL U1}, $E_1$ 
cannot recover any bits from $U_2$ transmitted message using the FL transmission.

\begin{proof}
{\rca To recover bits from $U_2$, $E_1$ has to apply XOR operation between the bits recovered from $U_1$ in the RL transmission and 
the bits derived from the satellite broadcast in the FL transmission. Hence, the information detected by $E_1$ in the FL depends on the bits recovered from $U_1$ in the RL transmission. The recovered bits from $U_1$ in the RL depend on the 
sign of the secrecy rate defined in~\eqref{eqn:XOR secrecy rate RL U1}. The sign of the RL secrecy rate in~\eqref{eqn:XOR secrecy rate RL U1} has the following possibilities:}

\begin{enumerate}

	\item If $I_{{U_1}}^{RL} - I_{{E_1}}^{RL}>0$, then $U_1$ can establish a perfectly secured connection so that the 
	eavesdropper cannot get any bits from $U_1$ in the RL~\cite{Oggier:2008}. Hence, $E_1$ does not have the bits 
	transmitted by $U_1$ in the RL and it cannot recover any bits from $U_2$ using the FL transmission.
		
	\item If $I_{{U_1}}^{RL} - I_{{E_1}}^{RL} \le 0$, then the secrecy rate is zero. Therefore, $U_1$ cannot 
	establish a secure connection in the RL. In this case, $U_1$ remains silent during the corresponding time slot. In this time slot, GW generates random bits instead of the bits from $U_1$ and applies XOR between them and the bits from $U_2$. As a result, $E_1$ cannot recover any bits from $U_2$ using the FL transmission.

\end{enumerate}
Note that since the RL time, $t_1$, is always positive and all the channels are known, the sign of the 
expression $I_{{U_1}}^{RL} - I_{{E_1}}^{RL}$ is known prior to the beamformer design. 
\end{proof}
\label{lem:E1 Ph IV}
\end{lemma}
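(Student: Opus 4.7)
The plan is to exploit the XOR structure of the broadcast signal. In Phase IV, the satellite transmits $s_{GW}$, whose underlying bit stream is $x_{GW}=x_1\oplus x_2$. Hence, even in the most favorable case where $E_1$ perfectly decodes $x_{GW}$ from the FL broadcast, extracting $x_2$ is only possible by XOR-ing $x_{GW}$ with an accurate estimate of $x_1$. Thus FL-based recovery of $x_2$ by $E_1$ reduces to the question of whether $E_1$ possesses $x_1$ after the RL phase. I would then dichotomize on the sign of $I_{U_1}^{RL}-I_{E_1}^{RL}$, exactly as the statement suggests.

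In the first case, when $I_{U_1}^{RL}-I_{E_1}^{RL}>0$, I would invoke the SIMO Gaussian wiretap coding result used in~\cite{Oggier:2008}: there exists a code for $U_1$ whose residual equivocation at $E_1$ approaches $H(x_1)$. Consequently, essentially no information about $x_1$ leaks to $E_1$, and any estimate $\hat{x}_1$ that $E_1$ can form is independent of $x_1$. Combining such an $\hat{x}_1$ with the decoded $x_{GW}$ therefore yields a sequence statistically independent of $x_2$, so no bits of $U_2$ are recoverable.

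In the second case, when $I_{U_1}^{RL}-I_{E_1}^{RL}\le 0$, positive secure communication on the RL is infeasible. By protocol, $U_1$ remains silent during this slot and the GW substitutes a locally generated random bit stream $r_1$ for $x_1$, broadcasting $x_{GW}=r_1\oplus x_2$. Because $r_1$ is unknown to $E_1$ and independent of $x_2$, the broadcast bits act as a one-time pad on $x_2$, making $x_2$ unrecoverable regardless of the quality of $E_1$'s FL observation. The remark at the end of the lemma that the sign of $I_{U_1}^{RL}-I_{E_1}^{RL}$ is known in advance from the CSI justifies the GW's ability to switch modes prior to beamformer design without guesswork.

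The principal obstacle I anticipate is making Case 1 fully rigorous: one must ensure that the secrecy notion being used precludes $E_1$ from combining any partial RL leakage with its FL observation to reconstruct even a fraction of $x_2$. Adopting the standard strong-secrecy criterion of~\cite{Oggier:2008} suffices, but a careful presentation should emphasize that the mutual information between the joint eavesdropper observation $(y_{E_1}^{RL},y_{E_1}^{FL_{XOR}})$ and $x_2$ vanishes, so that the conclusion of the lemma is not weakened by the assumption that $E_1$ stores the RL signal and jointly processes it with the FL signal.
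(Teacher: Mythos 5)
Your proposal is correct and follows essentially the same route as the paper: reduce FL recovery of $U_2$'s bits to $E_1$'s possession of $x_1$, split on the sign of $I_{U_1}^{RL}-I_{E_1}^{RL}$, invoke the wiretap result of~\cite{Oggier:2008} in the positive case, and rely on the protocol (silent $U_1$, GW-generated random bits XOR-ed with $x_2$) in the zero case. Your added remarks on equivocation, the one-time-pad effect of the random bits, and joint processing of the RL and FL observations merely make explicit what the paper's proof leaves informal.
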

A similar argument as in Lemma~~\ref{lem:E1 Ph IV} can be applied to $E_2$. 

Consequently, using Lemma~\ref{lem:E1 Ph IV}, the secrecy rate for the FL is given in Lemma~\ref{lem:FL SR}.}

{\rca  \begin{lemma}
Assume that there exists at least one RL with a positive secrecy rate. Then, the secrecy rate in 
the FL is given as below: 
\begin{align}
S{R^{F{L_{XOR}}}}   = \left\{ {\begin{array}{*{20}{c}}
 \min \left\{ {I_{{U_1}}^{F{L_{XOR}}},I_{{U_2}}^{F{L_{XOR}}}} \right\} & SR_{{U_1}}^{RL}>0,\\
& SR_{{U_2}}^{RL}>0,
\\
\\
I_{{U_1}}^{F{L_{XOR}}}                                                & SR_{{U_1}}^{RL}=0,\\
& SR_{{U_2}}^{RL}>0,
\\
\\
I_{{U_2}}^{F{L_{XOR}}}                                                & SR_{{U_1}}^{RL}>0,\\
& SR_{{U_2}}^{RL}=0.
\end{array}} \right.
\label{eqn:FL SR} 
\end{align}
\begin{proof}
Excluding the case that both RLs have zero secrecy rate, i.e., the total secrecy rate is zero, the secrecy 
rate for the FL transmission for different signs of the secrecy rate in the RL is given as follows:

\begin{enumerate}
	
	\item If $SR_{{U_1}}^{RL}>0$ and $SR_{{U_2}}^{RL}>0$, then according to Lemma~\ref{lem:E1 Ph IV}, 
	$E_1$ and $E_2$ cannot wiretap any bits from $U_2$ and $U_1$, respectively, using the FL transmission. Therefore, 
	using~\eqref{eqn:XOR downlink rate}, the secrecy rate in the FL is 
	$\min \left\{ {I_{{U_1}}^{F{L_{XOR}}},I_{{U_2}}^{F{L_{XOR}}}} \right\}$. \label{c:1}
	
	\item If $SR_{{U_1}}^{RL}>0$ and $SR_{{U_2}}^{RL}= 0$, then according to Lemma~\ref{lem:E1 Ph IV}, 
	$E_1$ cannot wiretap any bits from $U_2$ using the FL transmission. Further, since the RL of $U_2$ is not secure, $U_2$ does 
	not transmit and $E_2$ does not get any bits from $U_2$. Hence, $E_2$ cannot recover bits from $U_1$ using the FL 
	transmission. Since $U_1$ is not expected to receive any message because of $SR_{{U_2}}^{RL}= 0$, the FL secrecy rate is	$I_{{U_2}}^{F{L_{XOR}}}$.\label{c:2}

  \item If $SR_{{U_1}}^{RL}=0$ and $SR_{{U_2}}^{RL}>0$, similar to the procedure as in Case~\ref{c:2}, the secrecy rate in the FL is 
	$I_{{U_1}}^{F{L_{XOR}}}$.\label{c:3}
\end{enumerate}
According to the results in Cases~\ref{c:1},~\ref{c:2}, and~\ref{c:3}, the secrecy rate of the FL is derived as in~\eqref{eqn:FL SR}. 
\end{proof}
\label{lem:FL SR}
\end{lemma}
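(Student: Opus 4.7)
The plan is to proceed by case analysis on the signs of $SR_{U_1}^{RL}$ and $SR_{U_2}^{RL}$, leveraging Lemma~\ref{lem:E1 Ph IV} to neutralize each eavesdropper's FL contribution. The hypothesis excludes the trivial case where both RL secrecy rates vanish, so there are exactly three cases to consider, matching the three branches of the claimed expression in~\eqref{eqn:FL SR}.

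For the first case, where both $SR_{U_1}^{RL}$ and $SR_{U_2}^{RL}$ are strictly positive, I would argue that by Lemma~\ref{lem:E1 Ph IV} neither $E_1$ nor $E_2$ has access to the RL bits of the other user, so the XOR stream $s_{GW}$ broadcast in the FL is informationally useless to either eavesdropper. Hence the effective eavesdropper rate for the FL is zero, and the FL secrecy rate reduces to the ordinary broadcast rate $R^{FL_{XOR}}$ of~\eqref{eqn:XOR downlink rate}, namely $\min\{I_{U_1}^{FL_{XOR}}, I_{U_2}^{FL_{XOR}}\}$, which the GW must target since a common XOR-coded stream has to be decodable by both users.

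For the second case, $SR_{U_1}^{RL}>0$ and $SR_{U_2}^{RL}=0$, I would invoke the protocol choice from the proof of Lemma~\ref{lem:E1 Ph IV} that $U_2$ stays silent when its RL cannot be secured, so the XOR stream at the GW combines $U_1$'s bits with random padding in place of $U_2$'s bits. Since $U_1$ has no inbound message to decode (nothing was sent by $U_2$), only $U_2$ acts as a legitimate receiver in the FL; meanwhile $E_2$ obtained no RL bits from the silent $U_2$, so it cannot peel $U_1$'s bits off $s_{GW}$. This yields FL secrecy rate $I_{U_2}^{FL_{XOR}}$. The third case is symmetric to the second, giving $I_{U_1}^{FL_{XOR}}$, and assembling the three cases reproduces~\eqref{eqn:FL SR}.

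The main obstacle I anticipate is the second and third cases, where one must justify carefully why the silent/padding protocol is what allows the FL secrecy rate to be the \emph{single} capacity rather than a min or a difference: the padding prevents the corresponding eavesdropper from combining RL and FL observations, while the absence of a legitimate message for the silent user removes one of the two $\min$ constraints. Once this protocol-level point is fixed via Lemma~\ref{lem:E1 Ph IV}, the remaining work is bookkeeping.
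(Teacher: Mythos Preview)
Your proposal is correct and follows essentially the same approach as the paper: a three-case analysis on the signs of the RL secrecy rates, invoking Lemma~\ref{lem:E1 Ph IV} (and its silent/random-padding protocol) to zero out each eavesdropper's FL information, then reading off the FL rate from~\eqref{eqn:XOR downlink rate} in Case~1 and from the single active user's capacity in Cases~2 and~3. The only cosmetic slip is the phrase ``RL bits of the other user'' in Case~1---what you actually need (and clearly intend) is that $E_i$ lacks $U_i$'s RL bits and therefore cannot un-XOR the FL stream to recover $U_{3-i}$'s message.
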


According to Lemma~\ref{lem:FL SR}, when the XOR protocol is used, the FLs are totally secured. Note 
that for the Cases~\ref{c:2} and~\ref{c:3}, the GW creates random bits instead of the 
message from the user with insecure link, i.e., zero secrecy rate in the RL. Then, the GW applies XOR between the 
received message from the user which has a positive secrecy rate in the RL and the randomly generated 
bits. This way, the eavesdropper still receives a combined message when the secrecy rate is 
zero in one of the RLs.}

To derive the end-to-end secrecy rate for $U_1$, we invoke
Theorem~1 in~\cite{Awan:2012}, which states that, when decoding
and re-encoding is performed by an intermediate node, the secrecy
rate of each hop needs to be taken into account as a bottleneck to
derive the end-to-end secrecy rate. Since decoding and re-encoding
is performed at the GW, the result of Theorem~1 in~\cite{Awan:2012} can be applied. 
Consequently, using the mentioned theorem and the secrecy rate
derived in~\eqref{eqn:XOR secrecy rate RL U1} {\rca and the result of Lemma~\ref{lem:FL SR} in~\eqref{eqn:FL SR}}, 
the end-to-end secrecy rate for $U_1$ is
calculated by
{\rca \begin{align}
 SR_{{U_1}}^{XOR} &= {\rm{ }}\min \left\{ {SR_{{U_1}}^{RL},SR_{{U_1}}^{F{L_{XOR}}}} \right\}.                  
\label{eqn:XOR SR U1}
\end{align}}
The end-to-end secrecy rate for $U_2$ can be derived in a similar
way. The sum end-to-end secrecy rate is expressed as
\begin{align}
{ SR^{XOR} = SR_{{U_1}}^{XOR} + SR_{{U_2}}^{XOR}.}
\label{eqn:XOR sum secrecy rate}
\end{align}
\subsection{Conventional SATCOM} \label{con:satcom}
A   conventional scheme without using  network coding is described
here as a performance benchmark.
\subsubsection{Signal model}
As shown in Table~\ref{table:Phases}, the Phases I and II are the
same for the conventional and the XOR network coding
schemes, which result in the same signal model for both schemes. 
In Phases III and V, the GW sends back {\rca each element of the} processed $\textbf{s}_2$ and
$\textbf{s}_1$ to the satellite{\rca , respectively, using the ideal feeder link} where $\textbf{s}_1$ and
$\textbf{s}_2$ are $N_S\times 1$ vectors {\rca containing both the feed weights and the users' data signals}. $\textbf{s}_1$ and
$\textbf{s}_2$ are defined as ${{\bf{s}}_1} = {\qw_1}{\hat{s}_1}$
and ${{\bf{s}}_2} = {\qw_2}{\hat{s}_2}$, where ${\hat{s}_1}$ and
${\hat{s}_2}$ are {\rc the decoded and re-encoded versions of the data signals 
received from $U_1$ and $U_2$ at the GW
with unit power, and $\textbf{w}_1$ and $\textbf{w}_2$ are beamforming
vectors to be designed at the GW. Note that since different Gaussian codebooks are used at the 
GW to re-encode the signals for $U_1$ and $U_2$, the generated signals at the GW are different 
from those received from the users. Therefore, generated signals at the GW are shown 
by ${\hat{s}_1}$ and ${\hat{s}_2}$.} 

{\rc The satellite applies {\rca each component of the vector ${{\bf{s}}_2}$, containing the feed weight multiplied by the data signal, to the corresponding 
feed. Then, the beam is adjusted and}  $\hat{s}_2$ is sent toward $U_1$} in Phase IV, and the received signals at $U_1$ and $E_1$ are, respectively,
\begin{align}
y_{{U_1}}^{FL_{Con}} &= \qh_{S,{U_1}}^T{{\bf{s}}_2} + n_{{U_1}},
\label{eqn:Uni User rec sig U1}
\\
y_{{E_1}}^{FL_{Con}} &= \qh_{S,{E_1}}^T{{\bf{s}}_2} +
n_{{E_1}}. \label{eqn:Con E1 FL sig}
\end{align}
Similarly, at the end of Phase VI, the received signals at $U_2$ and
$E_2$ are, respectively,
\begin{align}
y_{{U_2}}^{FL_{Con}} &= \qh_{S,{U_2}}^T{{\bf{s}}_1}+  n_{{U_2}},
\label{eqn:Uni User rec sig U2}
\\
y_{{E_2}}^{FL_{Con}} &= \qh_{S,{E_2}}^T{{\bf{s}}_1}+  n_{{E_2}}.
\label{eqn:Con E2 FL sig}
\end{align}

  {\rc The beamformer weights in the conventional scheme are exclusively designed at the GW for each user. Hence, when data is being transmitted for $U_1$, the satellite's main lobe is focused toward $U_1$. Since $E_2$ is outside the beam directed toward $U_1$ and the beamfomers are designed to maximize the signal strength toward $U_1$, $E_2$ receives the signal from side lobes. As a result, the signal received by $E_2$ is weak. Similar conditions hold for $E_1$ when transmitting to $U_2$.} To make the derivation tractable, {\rc we neglect these weak signals
received by $E_2$ and $E_1$ in Phases IV and VI, respectively}. As a result, the sum secrecy rate derived for the
conventional scheme shall be an upper-bound.
\subsubsection{Users' rates}
The RL rates for the conventional SATCOM are the same as the XOR
network coding scheme in~\eqref{eqn:U1 Uplink Rate}
and~\eqref{eqn:U2 Uplink Rate}. Using~\eqref{eqn:Uni User rec sig
U1} and~\eqref{eqn:Uni User rec sig U2}, the FL rates to $U_1$ and
$U_2$ after self-interference cancelation can be derived,
respectively, as
\begin{align}
I_{U_1}^{FL_{Con}} = t_2 \log _2  \left( 1 + \frac{\left| \qh_{S,{U_1}}^T  \qw_2 \right|^2}  { \sigma^2_{U_1}} \right),
\label{eqn:Con U1 FL rate}
\\
I_{U_2}^{FL_{Con}} = t_3 \log _2  \left( 1 + \frac{\left| \qh_{S,{U_2}}^T  \qw_1 \right|^2}  { \sigma^2_{U_2}} \right).
\label{eqn:Con U2 FL rate}
\end{align}
In order to make the conventional method comparable to the
bidirectional one, we assume that the total available transmission time
for both the network coding and the conventional schemes are the same. In
other words, the RL time for the users is $t_1$ and the FL for $U_1$
and $U_2$ are $t_2$ and $t_3= 1- t_1 - t_2$, respectively.
\subsubsection{Eavesdroppers' channel capacities}
The RL capacities for $E_1$ and $E_2$ in the conventional SATCOM are
the same as the ones derived for the XOR network coding scheme.
Using~\eqref{eqn:Con E1 FL sig} and~\eqref{eqn:Con E2 FL sig}, the FL capacity from the satellite toward $E_1$ and $E_2$ to overhear the signals sent in
Phases IV and VI, respectively, are
\begin{align}
I_{E_1}^{FL_{Con}}  &= t_2 \log _2  \left( 1 + \frac{\left| \qh_{S,{E_1}}^T  \qw_2 \right|^2}  { \sigma^2_{E_1}} \right),
\label{eqn:Con E1 FL rate}
\\
I_{E_2}^{FL_{Con}}  &= t_3 \log _2  \left( 1 + \frac{\left| \qh_{S,{E_2}}^T  \qw_1 \right|^2}  { \sigma^2_{E_2}} \right).
\label{eqn:Con E2 FL rate}
\end{align}
\subsubsection{Secrecy rate definition}
The RL secrecy rate for $U_1$ and $U_2$ are the same as the XOR network coding scheme 
in Section~\ref{sec:sec rate def bi}. {\rca In the conventional scheme, the messages that 
$E_1$ receives in the RL and FL are different and can be decoded independently. Hence,} the FL 
secrecy rate for $U_1$ can be defined using~\eqref{eqn:Con U1 FL rate},~\eqref{eqn:Con E1 FL rate} 
{\rca and the result from~\cite{Oggier:2008}} as
\begin{align}
{ SR_{{U_1}}^{F{L_{Con}}}} = I_{{U_1}}^{FL_{Con}} - I_{{E_1}}^{FL_{Con}}.
\label{eqn:Con secrecy rate U1 FL}
\end{align}
Utilizing~\eqref{eqn:XOR secrecy rate RL U1},~\eqref{eqn:Con secrecy rate U1 FL}{\rca , and 
Theorem~1 in~\cite{Awan:2012},} the end-to-end secrecy rate for $U_1$ is derived as
\begin{align}
{ SR_{{U_1}}^{Con} = \min \left\{ {SR_{{U_1}}^{RL},SR_{{U_2}}^{F{L_{Con}}}} \right\}.}
\label{eqn:Con secrecy rate end-to-end U1}
\end{align}
The end-to-end secrecy rate for $U_2$ can be defined in a similar
way. Like in~Section~\ref{sec:sec rate def bi}, the sum secrecy
rate is
\begin{align}
{ S{R^{Con}} = SR_{{U_1}}^{Con} + SR_{{U_2}}^{Con}.}
\label{eqn:Con sum secrecy rate}
\end{align}
\section{Problem Formulation and the Proposed Solution}\label{Sec:Problem Formulation}
In this section, we study the problem of maximizing the sum secrecy
rate by optimizing the precoding vectors at the GW to shape the
satellite beams along with the RL and FL time allocation, given the maximum
available power $P_S$ at the satellite. We consider both the XOR network coding and 
the conventional schemes. {\rca For the XOR network coding, we just solve the optimal beamformer design 
for the secrecy rate derived from the first case of the FL secrecy rate in~\eqref{eqn:FL SR}. The solutions for 
the optimal beamformer design for the other two cases of~\eqref{eqn:FL SR} are similar to the first case of~\eqref{eqn:FL SR}.}  
\subsection{Network coding for bidirectional SATCOM}
Using the sum secrecy rate defined in~\eqref{eqn:XOR sum secrecy
rate}, the optimization problem for the XOR network coding scheme is
defined as
\begin{align}
& \mathop {\max }\limits_{\qw,{t_1},{t_2}}  \,\,\,\, \min \left\{ { I_{{U_1}}^{RL} - I_{{E_1}}^{RL},{\rca \min \left\{ {I_{{U_1}}^{F{L_{XOR}}},I_{{U_2}}^{F{L_{XOR}}}} \right\}}} \right\} 
\nonumber\\
&+ \min \left\{ {I_{{U_2}}^{RL} - I_{{E_2}}^{RL},{\rca \min \left\{ {I_{{U_1}}^{F{L_{XOR}}},I_{{U_2}}^{F{L_{XOR}}}} \right\}}} \right\}   
\nonumber\\
& \,\,\,\, \mbox{s.t.}   \qquad  t_1+t_2=1,
\nonumber\\ 
&\,\,\,\,\,\,\,\,\,\, \qquad   \left\| \qw \right\|^2   \leq {P_S}.
\label{eqn:XOR Opt 1}
\end{align}

{\rca To transform~\eqref{eqn:XOR Opt 1} into a standard convex form, we apply the following procedures. First,} we assume that 
$t_1$ and $t_2$ are fixed and study the beamforming design. After designing the optimal beamformer, the optimal time
allocation is found by performing 1-D search of $t_1$ over the range $(0,1)$. {\rca Second, after considering a fixed transmission time for the RL and FL, the RL secrecy rate expressions in~\eqref{eqn:XOR Opt 1} are fixed and can be dropped 
without loss of generality. Hence,~\eqref{eqn:XOR Opt 1} boils down into
\begin{align}
& \mathop {\max }\limits_{\qw}  \,\,\,\, \min \left\{ {I_{{U_1}}^{F{L_{XOR}}},I_{{U_2}}^{F{L_{XOR}}}} \right\}  
\nonumber\\
& \, \mbox{s.t.}   \qquad  \left\| \qw \right\|^2   \leq {P_S}.
\label{eqn:XOR Opt 2}
\end{align}
Next, we introduce the auxiliary variable $u$ to remove the ``min'' 
operators. Then,~\eqref{eqn:XOR Opt 2} yields
\begin{align}
& \mathop {\max }\limits_{\qw,u>0}  \,\,\,\, u
\nonumber\\
& \,\,\, \mbox{s.t.}    \qquad   \left\| \qw \right\|^2   \leq {P_S},
\nonumber\\ 
& \,\,\,\,\,\,\,\,\,\,\, \qquad \sigma _{{U_1}}^2\left( {{2^{\frac{u}{{{t_2}}}}} - 1} \right) \le {\left| \qh_{S,{U_1}}^T  \qw_2 \right|^2},
\nonumber\\ 
& \,\,\,\,\,\,\,\,\,\,\, \qquad \sigma _{{U_2}}^2\left( {{2^{\frac{u}{{{t_2}}}}} - 1} \right) \le {\left| \qh_{S,{U_2}}^T  \qw_2 \right|^2}.
\label{eqn:XOR Opt 3}
\end{align}
The last two constraints in~\eqref{eqn:XOR Opt 3} are not convex. By introducing 
${\bf{W}} = {\bf{w}}{{\bf{w}}^H}$, we rewrite~\eqref{eqn:XOR Opt 3} as
\begin{align}
& \mathop {\max }\limits_{\qW\succeq 0,u>0}  \,\,\,\, u  
\nonumber\\
& \,\,\,\,\,\,\, \mbox{s.t.}    \qquad  \text{tr}\left( {\bf{W}} \right)  \leq {P_S},
\nonumber\\ 
&  \,\, \qquad \qquad \sigma _{{U_1}}^2\left( {{2^{\frac{u}{{{t_2}}}}} - 1} \right) \le \text{tr}\left( {{\bf{WA}}} \right),
\nonumber\\ 
&  \,\, \qquad \qquad \sigma _{{U_2}}^2\left( {{2^{\frac{u}{{{t_2}}}}} - 1} \right) \le \text{tr}\left( {{\bf{WB}}} \right),
\label{eqn:XOR Opt 4}
\end{align}
where ${\bf{A}} = {\bf{h}}_{S,{U_1}}^*{\bf{h}}_{S,{U_1}}^T$ and 
${\bf{B}} = {\bf{h}}_{S,{U_2}}^*{\bf{h}}_{S,{U_2}}^T$.} The rank constraint, $\text{rank}\left(
{\qW} \right) = 1$, in~\eqref{eqn:XOR Opt 4} is dropped. {\rca The optimal beamforming weight in~\eqref{eqn:XOR Opt 4} is 
designed for the FL transmission. However, since the RL secrecy rates, which can be bottlenecks for the total 
end-to-end secrecy rate, are not considered in~\eqref{eqn:XOR Opt 4}, extra unnecessary power at the satellite may be utilized. To fix this, one 
last constraint is added to~\eqref{eqn:XOR Opt 4} to get 
\begin{align}
& \mathop {\max }\limits_{{\bf{W}} \succeq 0,u>0}  \,\,\,\, u  
\nonumber\\
& \,\,\,\,\,\,\, \mbox{s.t.}    \qquad  \text{tr}\left( {\bf{W}} \right)  \leq {P_S},
\nonumber\\ 
&  \,\, \qquad \qquad \sigma _{{U_1}}^2\left( {{2^{\frac{u}{{{t_2}}}}} - 1} \right) \le \text{tr}\left( {{\bf{WA}}} \right),
\nonumber\\ 
&  \,\, \qquad \qquad \sigma _{{U_2}}^2\left( {{2^{\frac{u}{{{t_2}}}}} - 1} \right) \le \text{tr}\left( {{\bf{WB}}} \right),
\nonumber\\
&  \,\, \qquad \qquad u \le \max \left\{ {I_{{U_1}}^{RL} - I_{{E_1}}^{RL},I_{{U_2}}^{RL} - I_{{E_2}}^{RL}} \right\}.
\label{eqn:XOR Opt 5}
\end{align}}

Problem~\eqref{eqn:XOR Opt 5} is recognized as a SDP problem, thus convex and can be efficiently
solved. {\rc  According to {\rca Theorem~2.2} in~\cite{Wenbao}, when there are {\rca three} constraints 
on the matrix variable of a SDP problem such as~\eqref{eqn:XOR Opt 5}, existence
of a rank-1 optimal solution for $N_S>2$ is guaranteed}. {\rc Hence}, if the
solution to~\eqref{eqn:XOR Opt 5} happens not to be rank-one,
we can use {\rca Theorem~2.2} in~\cite{Wenbao} to recover the rank-one
optimal solution {\rc out of a non-rank-1 solution.}

{\rc 
According to~\cite{Boyd}, the complexity of  problem~\eqref{eqn:XOR Opt 5} is
\begin{align}
{\rm{O}}\left( {\left( {{\rca{3}} + N_S^2} \right){{\left( {\frac{{N_S^2\left( {N_S^2 + 1} \right)}}{2}} \right)}^3}} \right).
\label{eqn:com comp}
\end{align}
Solving~\eqref{eqn:XOR Opt 5} is accompanied along with a 1-D exhaustive search over the time variable $t$. We assume that the time variable is 
divided into $m$ bins between $0$ and $1$. The overall computational complexity for designing the beamformer 
for the XOR network coding scheme is {\rca $m$} times the complexity mentioned 
in~\eqref{eqn:com comp}. This is typically affordable since the optimization is performed at the GW on the ground.}
\subsection{Conventional SATCOM} \label{con:form}
According to the secrecy rate defined in~\eqref{eqn:Con sum secrecy
rate}, the optimization problem for the conventional scheme is
\begin{align}
&\mathop {\max }\limits_{{{\bf{w}}_1},{{\bf{w}}_2},t_1,t_2} \,\,\,
\min \left\{ {I_{{U_1}}^{RL} - I_{ {E_1}}^{RL},I_{{U_2}}^{FL_{Con}}
- I_{ {E_2}}^{FL_{Con}}} \right\}
\nonumber\\
&+ \min \left\{ {I_{{U_2}}^{RL} - I_{
{E_2}}^{RL},I_{{U_1}}^{FL_{Con}} - I_{{E_1}}^{FL_{Con}}} \right\}
\nonumber\\
& \,\,\,\,\,\,\,\,\,       \text{s.t.} \qquad   \left\| {\bf{w}}_1 \right\|^2 +
\left\| {\bf{w}}_2 \right\|^2 \le {P_S}. \label{eqn:Conv Opt 1}
\end{align}
Assume that the power  split between  the beamforming vectors
$\qw_1$ and $\qw_2$ is  $\beta {P_S}$ and $\left( {1 - \beta }
\right){P_S}$ where $\beta$ is a given parameter with $0 \leq \beta
\leq1 $. Using the parameter $\beta$, the beamforming vectors
$\qw_1$ and $\qw_2$ in the power constraint of~\eqref{eqn:Conv Opt
1} can be separated. Hence,~\eqref{eqn:Conv Opt 1} can be
rewritten as
\begin{align}
&\mathop {\max }\limits_{{\qw_1},{\qw_2},t_1,t_2} \,\,\,\,
\min \left\{ {I_{{U_1}}^{RL} - I_{ {E_1}}^{RL},I_{{U_2}}^{FL_{Con}}
- I_{ {E_2}}^{FL_{Con}}} \right\}
\nonumber\\
&+ \min \left\{ {I_{{U_2}}^{RL} - I_{
{E_2}}^{RL},I_{{U_1}}^{FL_{Con}} - I_{ {E_1}}^{FL_{Con}}} \right\}
\nonumber\\
& \,\,\,\,\,\,\,\,\,  \text{s.t.}   \qquad    \left\| \qw_1 \right\|^2   \le
\beta {P_S},
\nonumber\\
&  \,\,\,\, \qquad  \qquad  \left\| \qw_2 \right\|^2 \le \left( {1 - \beta }
\right){P_S}. \label{eqn:Conv Opt 2}
\end{align}
 The problem \eqref{eqn:Conv Opt 2} can be expanded as
\begin{align}
& \hspace{-0.1cm}\mathop {\max }\limits_{{\qw_1},{\qw_2},{t_1},{t_2}}
 \min \left\{ {  SR_{{U_1}}^{RL}, {t_2}\log \left( {\frac{{\sigma _{{E_2}}^2}}{{\sigma _{{U_2}}^2}}\frac{{\sigma _{{U_2}}^2 + |{\bf{h}}_{S,{U_2}}^T{{\bf{w}}_1}{|^2}}}{{\sigma _{{E_2}}^2 + |{\bf{h}}_{S,{E_2}}^T{{\bf{w}}_1}{|^2}}}} \right)  } \right\}
\nonumber\\
&+ \min \left\{ {  SR_{{U_2}}^{RL}, {t_3}\log \left( {\frac{{\sigma _{{E_1}}^2}}{{\sigma _{{U_1}}^2}}\frac{{\sigma _{{U_1}}^2 + |{\bf{h}}_{S,{U_1}}^T{{\bf{w}}_2}{|^2}}}{{\sigma _{{E_1}}^2 + |{\bf{h}}_{S,{E_1}}^T{{\bf{w}}_2}{|^2}}}} \right)  } \right\}
\nonumber\\
& \,\,\,\,\,\,\,\,\, \text{s.t.}   \qquad     \left\| \qw_1 \right\|^2 \le
\beta {P_S},
\nonumber\\
&   \,\,\,\,  \qquad  \qquad   \left\| \qw_2 \right\|^2 \le \left( {1 - \beta } \right){P_S}. 
\label{eqn:Conv Opt 3}
\end{align}
{\rca Before further simplifying~\eqref{eqn:Conv Opt 3}, we first mention 
the following theorem.}
{\rca \begin{thm}
If the achievable secrecy rate is strictly greater than zero, the
power constraints in~\eqref{eqn:Conv Opt 3} are active at the
optimal point ${\qw}_1^\star$ and ${\qw}_2^\star$, i.e., 
$\left\| \qw_1 \right\|^2 = \beta {P_S}$ and $\left\| \qw_2 \right\|^2 = \left( {1 - \beta }
\right){P_S}$. 
\label{thm:active}
\end{thm}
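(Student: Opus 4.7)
My plan is to prove Theorem~\ref{thm:active} by contradiction. Suppose $({\qw}_1^\star, {\qw}_2^\star)$ is an optimal solution of~\eqref{eqn:Conv Opt 3} achieving a strictly positive sum secrecy rate, while at least one of the two power constraints is slack; without loss of generality, assume $\|{\qw}_1^\star\|^2 < \beta P_S$. The idea is to rescale ${\qw}_1^\star$ upward to the boundary and show that this either strictly improves the objective (contradicting optimality) or at worst preserves it while rendering the constraint active.

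First, I would set $\tilde{\qw}_1 = \alpha\, {\qw}_1^\star$ with $\alpha = \sqrt{\beta P_S/\|{\qw}_1^\star\|^2} > 1$, so that $\|\tilde{\qw}_1\|^2 = \beta P_S$ and feasibility is preserved. Since only the first $\min$-term in~\eqref{eqn:Conv Opt 3} depends on ${\qw}_1$, I would analyze how that term changes under scaling. Writing $a = |{\qh}_{S,U_2}^T {\qw}_1^\star|^2$ and $b = |{\qh}_{S,E_2}^T {\qw}_1^\star|^2$, the FL-side expression becomes $g(\alpha^2) = t_2 \log\!\left(\frac{\sigma_{E_2}^2(\sigma_{U_2}^2 + \alpha^2 a)}{\sigma_{U_2}^2(\sigma_{E_2}^2 + \alpha^2 b)}\right)$. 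A short derivative computation shows that $dg/d(\alpha^2)$ has the same sign as $a\sigma_{E_2}^2 - b\sigma_{U_2}^2$, which is positive precisely when $g(1)$ itself is positive, i.e., when the original FL secrecy rate for $U_2$ is positive.

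Next, I would perform a case analysis driven by the positivity of the sum secrecy rate. At least one of the two $\min$-terms must be strictly positive; suppose it is the first (the other case is symmetric). Then $SR_{U_1}^{RL} > 0$ and $g(1) > 0$, and by the derivative analysis $g$ is strictly increasing in $\alpha^2$ on $[1,\alpha^2]$. If the $\min$ at $\alpha=1$ is attained by $g(1)$, replacing ${\qw}_1^\star$ by $\tilde{\qw}_1$ strictly increases $\min\{SR_{U_1}^{RL}, g\}$ (until $g$ possibly meets $SR_{U_1}^{RL}$), contradicting optimality. If instead $SR_{U_1}^{RL} \le g(1)$, scaling leaves the first $\min$-term unchanged; but then the scaled solution is also optimal and satisfies $\|\tilde{\qw}_1\|^2 = \beta P_S$, so the constraint may be taken to be active. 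An analogous argument applied to ${\qw}_2^\star$ handles any slack in the second constraint.

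The main obstacle is the ``flat'' regime where the RL term already binds the first $\min$: there scaling does not strictly improve the objective, so a pure contradiction argument fails. I would resolve this by reading Theorem~\ref{thm:active} as the existence of an optimal solution with both constraints active, and noting that in the flat regime the scaling operation produces exactly such a solution at no loss. Combined with the strict-improvement argument in the complementary regime, this exhausts all scenarios in which the sum secrecy rate is positive and completes the proof.
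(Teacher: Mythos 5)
Your proposal is correct and follows essentially the same route as the paper's own proof: argue by contradiction, scale the slack beamformer up to the power boundary, and use the fact that under the positivity condition the forward-link secrecy term is monotonically increasing in the scaling factor (your derivative sign condition $a\sigma_{E_2}^2 - b\sigma_{U_2}^2>0$ is exactly the paper's conditions~\eqref{eqn:Pos1}--\eqref{eqn:Pos2}). The only real difference is how the RL-bottleneck ``flat'' regime is handled: you treat it explicitly and recover the statement as the existence of an optimal point with active constraints, whereas the paper dismisses it with a brief remark that the RL/FL time allocation balances the two rates --- your patch of that gap is, if anything, the more careful one.
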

\begin{proof}
The proof is given in Appendix~\ref{app:active}.
\end{proof}}

Using Theorem~\ref{thm:active}, we can show that the constraints
in~\eqref{eqn:Conv Opt 3} are active which enables us to
write~\eqref{eqn:Conv Opt 3} as
\begin{align}
&\mathop {\max }\limits_{{\qw_1},{\qw_2},{t_1},{t_2}}
\,\,\,\, \min \left\{ {I_{{U_1}}^{RL} - I_{
{E_1}}^{RL},  {t_2}\log \left( {\frac{{\sigma _{{E_2}}^2}}{{\sigma _{{U_2}}^2}}\frac{{{\bf{w}}_1^H{{\bf{U}}_2}{{\bf{w}}_1}}}{{{\bf{w}}_1^H{{\bf{E}}_2}{{\bf{w}}_1}}}} \right)  } \right\}
\nonumber\\
&+ \min \left\{ {I_{{U_2}}^{RL} - I_{ {E_2}}^{RL}, {t_3}\log \left( {\frac{{\sigma _{{E_1}}^2}}{{\sigma _{{U_1}}^2}}\frac{{{\bf{w}}_2^H{{\bf{U}}_1}{{\bf{w}}_2}}}{{{\bf{w}}_2^H{{\bf{E}}_1}{{\bf{w}}_2}}}} \right)   } \right\}
\nonumber\\
& \,\,\,\,\,\,\,\, \text{s.t.}   \qquad     \left\| \qw_1 \right\|^2 =
\beta {P_S},
\nonumber\\
& \,\,\, \qquad  \qquad  \left\| \qw_2 \right\|^2 = \left( {1 - \beta }
\right){P_S}, \label{eqn:Conv Opt 4}
\end{align}
where $ {{\bf{U}}_1}\triangleq \frac{{\sigma _{{U_1}}^2}}{{\left( {1
- \beta } \right){P_S}}}{\bf{I}} +
\qh_{S,{U_1}}^*\qh_{S,{U_1}}^T, {{\bf{U}}_2} \triangleq
\frac{{\sigma _{{U_2}}^2}}{{ \beta {P_S}}}{\bf{I}} +
\qh_{S,{U_2}}^*\qh_{S,{U_2}}^T, {{\bf{E}}_1} \triangleq
\frac{{\sigma _{{E_1}}^2}}{{\left( {1 - \beta }
\right){P_S}}}{\bf{I}} +
\qh_{S,{E_1}}^*\qh_{S,{E_1}}^T,{{\bf{E}}_2} \triangleq
\frac{{\sigma _{{E_2}}^2}}{{\beta {P_S}}}{\bf{I}} +
\qh_{S,{E_2}}^* \qh_{S,{E_2}}^T.$

The benefit of \eqref{eqn:Conv Opt 4} is that given $\beta$, ${\bf w}_1$ and ${\bf w}_2$  can be optimized separately. To be specific, the optimal ${\bf w}_1$ and ${\bf w}_2$ corresponds to the eigenvectors associated with the maximum eigenvalues of matrices ${\bf{C}} = {\bf{L}}_1^{ - 1}{{\bf{U}}_1}{\bf{L}}_1^{ - H}$ and ${\bf{D}} = {\bf{L}}_2^{ - 1}{{\bf{U}}_2}{\bf{L}}_2^{ - H}$ where ${{\bf{E}}_1} = {{\bf{L}}_1}{\bf{L}}_1^H$ and ${{\bf{E}}_2} = {{\bf{L}}_2}{\bf{L}}_2^H$, respectively. As a result,~\eqref{eqn:Conv Opt 4} can be simplified into
\begin{align}
&\mathop  {\max }\limits_{\scriptstyle0 < t_1 < 1\hfill\atop \scriptstyle0 < t_2 < 1\hfill}  \,\, \min \left\{
{I_{U_1}^{RL} - I_{E_1}^{RL}, t_2\log \left( \frac{\sigma
_{{E_2}}^2}{\sigma _{U_2}^2}{\lambda _{\max}}\left( {\bf{C}} \right) \right)} \right\}
\nonumber\\
& + \min \left\{ {I_{U_2}^{RL} - I_{E_2}^{RL}, t_3 \log \left(
\frac{\sigma _{{E_1}}^2}{\sigma _{{U_1}}^2}{\lambda _{\max
}}\left( {\bf{D}}  \right) \right)} \right\}.
\label{eqn:Conv Opt 5}
\end{align}
Note that the constraints of~\eqref{eqn:Conv Opt 4} are dropped in~\eqref{eqn:Conv Opt 5} due to the 
homogeneity of the objective function. To solve~\eqref{eqn:Conv Opt 5}, we introduce auxiliary variables as
$u_1$ and $u_2$ to remove the ``$\min$'' operators as
\begin{align}
&\mathop {\max }\limits_{{t_1},{t_2},{u_1},{u_2}} \,\,\,\,  {u_1} +
{u_2}
\nonumber\\
& \,\,\,\,\,\,\, \text{s.t.}   \qquad    {u_1} \le {t_1}c,
\refstepcounter{equation} \subeqn \label{subeq:u1 1}
\nonumber\\
&    \qquad  \qquad   \,\,               {u_1} \le {t_2}\log \left(
{\frac{{\sigma _{{E_2}}^2}}{{\sigma _{{U_2}}^2}}{\lambda _{\max
}}\left( {\bf C} \right)} \right), \subeqn
\label{subeq:u1 2}
\nonumber\\
&    \qquad  \qquad    \,\,              {u_2} \le {t_1}d, \subeqn
\label{subeq:u2 1}
\nonumber\\
&    \qquad  \qquad      \,\,            {u_2} \le {t_3}\log \left(
{\frac{{\sigma _{{E_1}}^2}}{{\sigma _{{U_1}}^2}}{\lambda _{\max
}}\left( {\bf D}   \right)} \right), \subeqn
\label{subeq:u2 2}
\nonumber\\
&    \qquad  \qquad        \,\,          u_1,u_2 \geq 0,
\subeqn\label{subeq:u} \\& \qquad  \qquad \,\, 0<
{t_1}<1,0<{t_2}<1,\subeqn
\end{align}
where 
\begin{align}
c \triangleq \log \frac{1 +
\frac{P_{U_1}\|{\bf{h}}_{U_{1,S}}\|^2}{\sigma_S^2}}{{\left( {1 +
\frac{{{P_{{U_1}}}{{\left| {{h_{{U_1},{E_1}}}} \right|}^2}}}{{\sigma
_{{E_1}}^2}}} \right)}}, \,\,   d\triangleq\log \frac{1 +
\frac{P_{U_2}\|{\bf{h}}_{U_{2,S}}\|^2}{\sigma_S^2}}  {{\left( {1 +
\frac{{{P_{{U_2}}}{{\left| {{h_{{U_2},{E_2}}}} \right|}^2}}}{{\sigma
_{{E_2}}^2}}} \right)}}, \label{eqn:Conv Opt 6}
\end{align}
and $t_3 = 1 - t_1 - t_2$. Clearly, it is a linear programming
problem and can be optimally solved. After that, we use 1-D search
to find the optimal power allocation parameter $\beta^\star$.

\section{Simulation Results}\label{Sec:Simulation Results}
In this section, we present numerical results to evaluate the
secrecy rate of the XOR network coding based SATCOM protocol 
and compare it with the conventional scheme. We consider both i)
equal RL and FL time allocation (ETA), and ii)
optimized time allocation between the RL and the FL (OTA). We use
labels ``XOR-ETA'' and ``XOR-OTA'' to denote equal time allocation
and optimal time allocation policies, respectively.

In our simulations, $B$ denotes the carrier bandwidth, $41.67$
kHz, for both RL and FL transmissions. Since there is a main direct
link from the satellite to the users as well as some diffuse components, the
channel from the satellite to the users can be modeled as
Rician~\cite{Vucetic:1992}. The $K$-factor for the FL is determined
by the multipath average scattered power and random log-normal
variable using the values provided by~\cite{Vucetic:1992}. Due to
the ``scintillation'' effect~\cite{ITU:Uplink}, we have multipath in
the RL. Moreover, there exists a direct link like the FL case.
Therefore, the RL can be considered to follow Rician distribution with a higher
K-factor which is assumed to be $15$ dB. The rest of the link
parameters are summarized in Table~\ref{table:Link budget
parameters}~\cite{Fossa:1998}.  The satellite's FL transmission power in Table~\ref{table:Link budget
parameters} shows the carrier power used in the following transmissions: 1) the broadcast in Phase IV of the XOR scheme or, 2) the transmissions in 
Phases IV and V of the conventional reference scheme. If the satellite's FL transmission power is not a variable in a simulation scenario, its value provided by 
Table~\ref{table:Link budget parameters} is used.

The  ground channels between the users and the eavesdroppers are assumed to
follow a Rayleigh distribution with the pathloss calculated by
\begin{align}
{L} = 10\log \left[ {{{\left( {\frac{{4\pi }}{\lambda }}
\right)}^2}{d^\gamma }} \right], 
\label{eqn:Pathloss}
\end{align}
where $\gamma$ is the pathloss exponent which we assume to be $\gamma =3.7$.
The maximum Doppler shift is
calculated using the following relation as
\begin{align}
{f_D}_{max} = \frac{v}{\lambda } = \frac{{v{f_c}}}{c},
\label{eqn:Terrestrial Doppler Shift}
\end{align}
where $v$ is the user's speed, $f_c$ is the maximum frequency used
 and $c$ is the light speed.

Since the carrier bandwidth is $41.67$ kHz, we assume that the RL
operating bandwidth is $1616-1616.04167$ MHz for $U_1$,
$1616.04367-1616.08534$ MHz for $U_2$ and the FL operating
bandwidth is $1616-1616.04167$ MHz which is common between the
users. Each user is supposed to move in a random direction with a $10$
m/s speed. If not explicitly mentioned, each eavesdropper's distance
to the user is randomly changed between $2$ to $2.5$ km.

We first show  the average sum secrecy rate in Fig.~\ref{fig:ASR vs
Sat Ant} when   the number of feeds used on the satellite  varies
from $3$ to $10$.    As we can see, the XOR network coding scheme can achieve
over {\rca 54\%} higher average sum secrecy rate than the conventional one.
It can be observed that optimizing the RL and FL communication times
improves the average sum secrecy rate for both schemes considerably,
especially for the conventional scheme {\rca in higher number of feeds}.
\begin{table}[]
\caption{Link budget and parameters} 
\centering 
\begin{tabular}{|l| l|} 
\hline
Parameter & Value  \\ [0.5ex] 
\hline 
\hline
Satellite orbit type  & LEO  \\ 
\hline
Operating band (1$\sim$2 GHz) & L-band  \\
\hline
RL and FL frequency band, MHz   &  1616-1626.5       \\
\hline
Beams on the Earth                    &   48  \\
\hline
Number of antenna arrays           &  318  \\
\hline
Frequency reuse factor (FRF)         &  12\\
\hline
Number of carriers per beam               &   20   \\
\hline
Carrier bandwidth, $B_c$, kHz             &  41.67  \\
\hline
Guard bandwidth, kHz      &  2 \\
\hline
Satellite's antenna gain per beam, dBi    &  24.3 \\
\hline
Total power at the satellite, dBW         &  31.46 \\
\hline
Satellite noise temperature, K            & 290 \\
\hline
Terminal noise temperature, K             & 321 \\
\hline
Satellite's FL transmission power, dBW    &  7.65 \\
\hline
Mobile device radiation power, dBW        &  0 \\
\hline
Mobile device antenna gain, dBi           & $3.5$ \\
\hline
Return and forward link pathloss, dB      & 151 \\
\hline
Doppler shift due to satellite velocity, Hz   &  270 \\
\hline
Envelope mean of the direct wave, $m_s$ & 0.787     \\
\hline
The variance of the direct wave, $\sigma_s^2$  & 0.0671   \\
\hline
The power of the diffuse component &  0.0456    \\
\hline
\end{tabular}
\label{table:Link budget parameters}
\end{table}
\begin{figure}[]
  \centering
  \includegraphics[width=8.5cm]{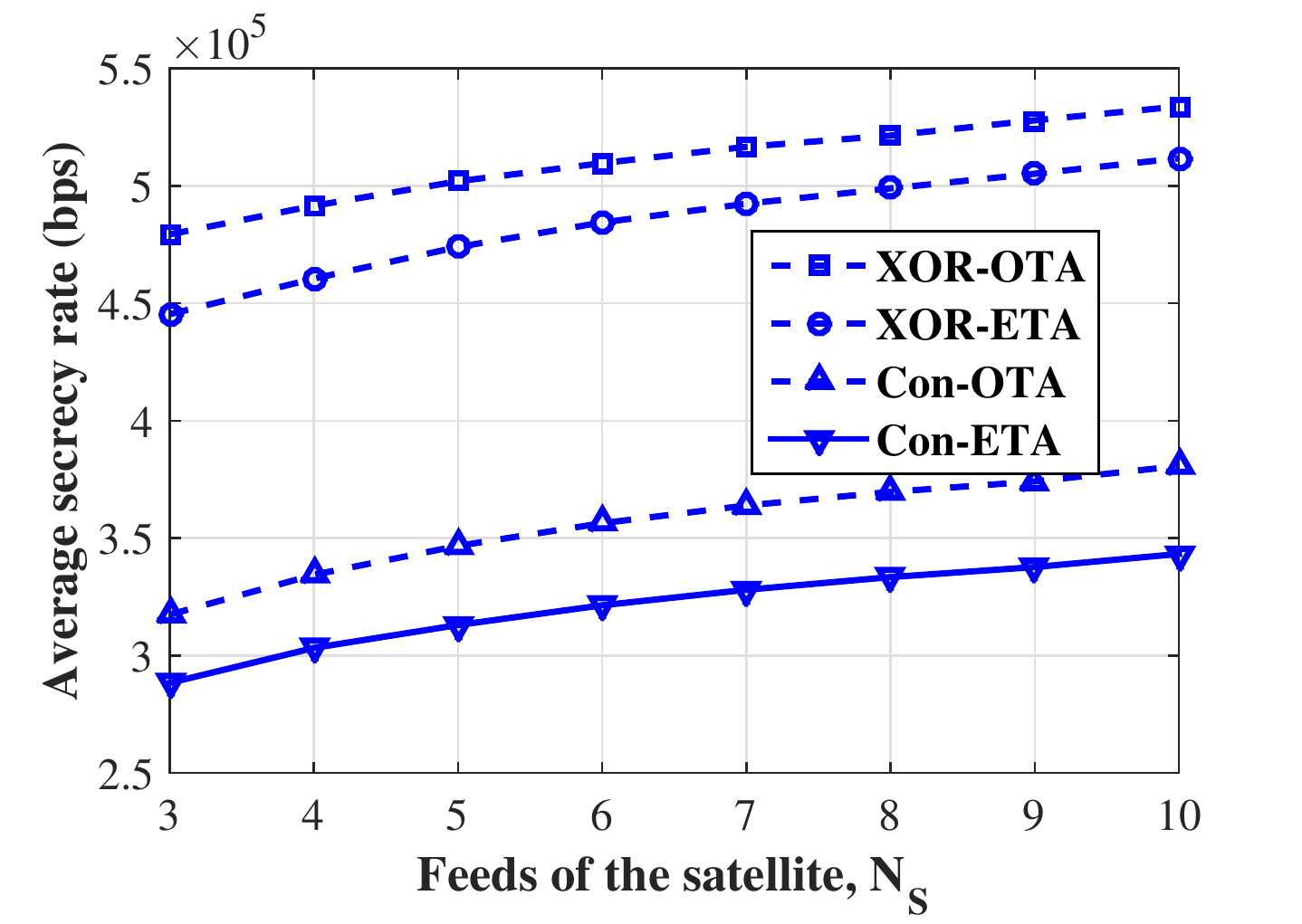}
  \caption{Average sum secrecy rate versus different number of feeds on the satellite for the XOR network coding and conventional
  schemes.}
  \label{fig:ASR vs Sat Ant}
\end{figure}
\begin{figure}[]
  \centering
  \includegraphics[width=8.5cm]{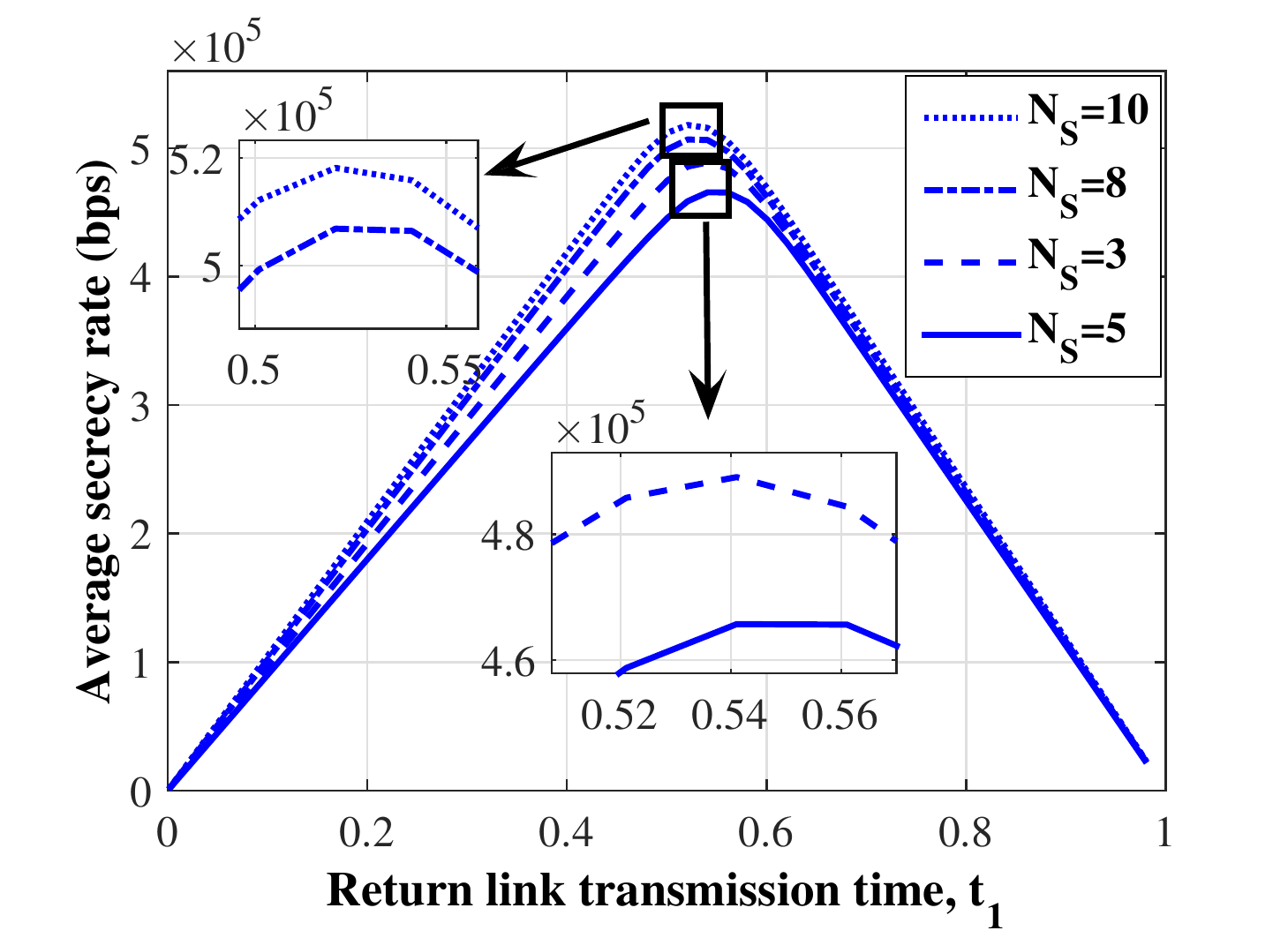}
  \caption{Average sum secrecy rate versus  the RL  time allocation $t_1$ in the XOR network coding scheme.}
  \label{fig:ASR vs uplink time-ant}
\end{figure}
\begin{figure}[]
  \centering
  \includegraphics[width=8.5cm]{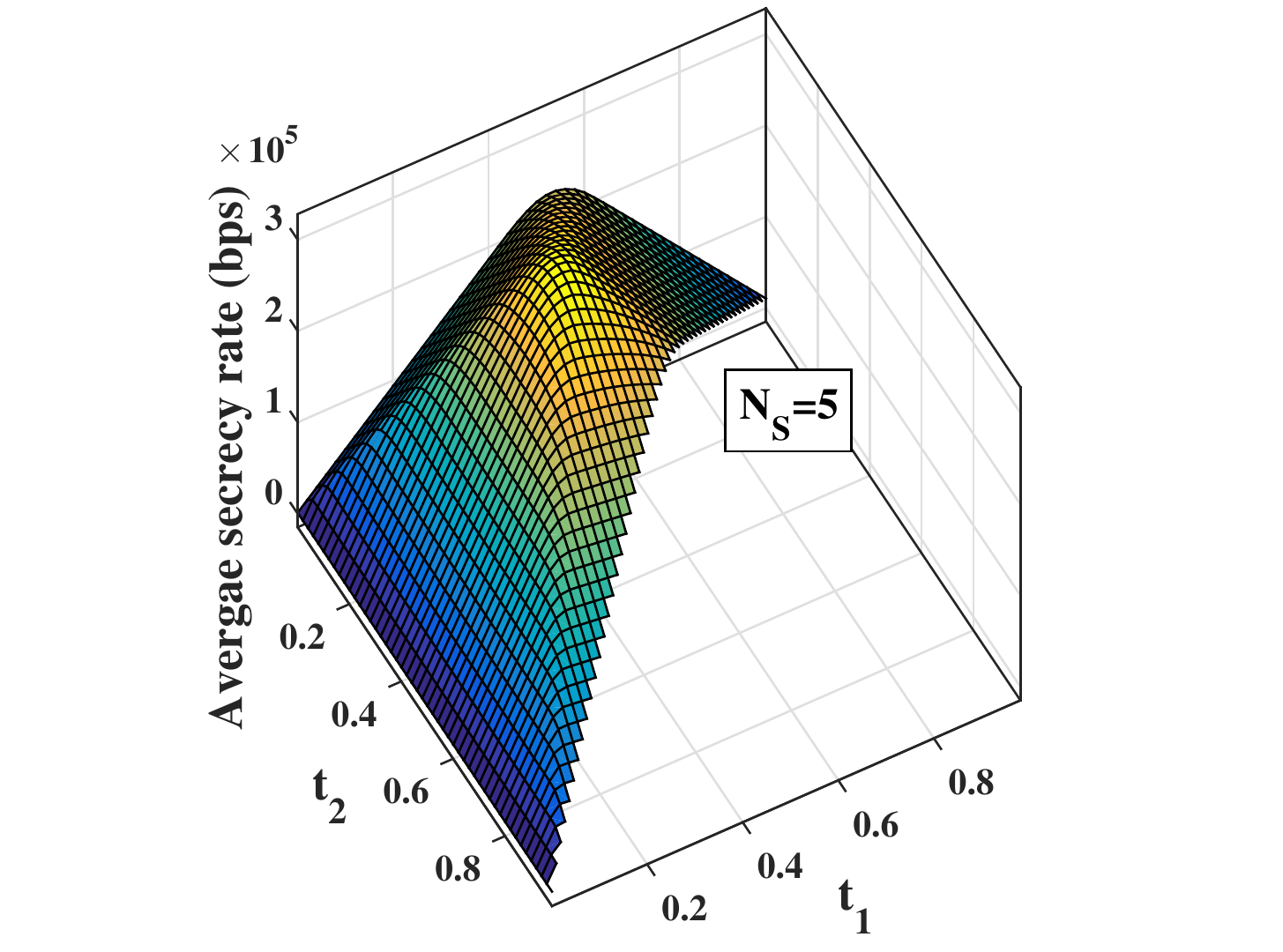}
  \caption{Average sum secrecy rate versus different RL, $t_1$, and FL, $t_2$ and $t_3=1-t_1-t_2$, time allocation in the conventional
  scheme.}
  \label{fig:ASR vs uplink time-con-ant}
\end{figure}
\begin{figure}[]
  \centering
  \includegraphics[width=8.5cm]{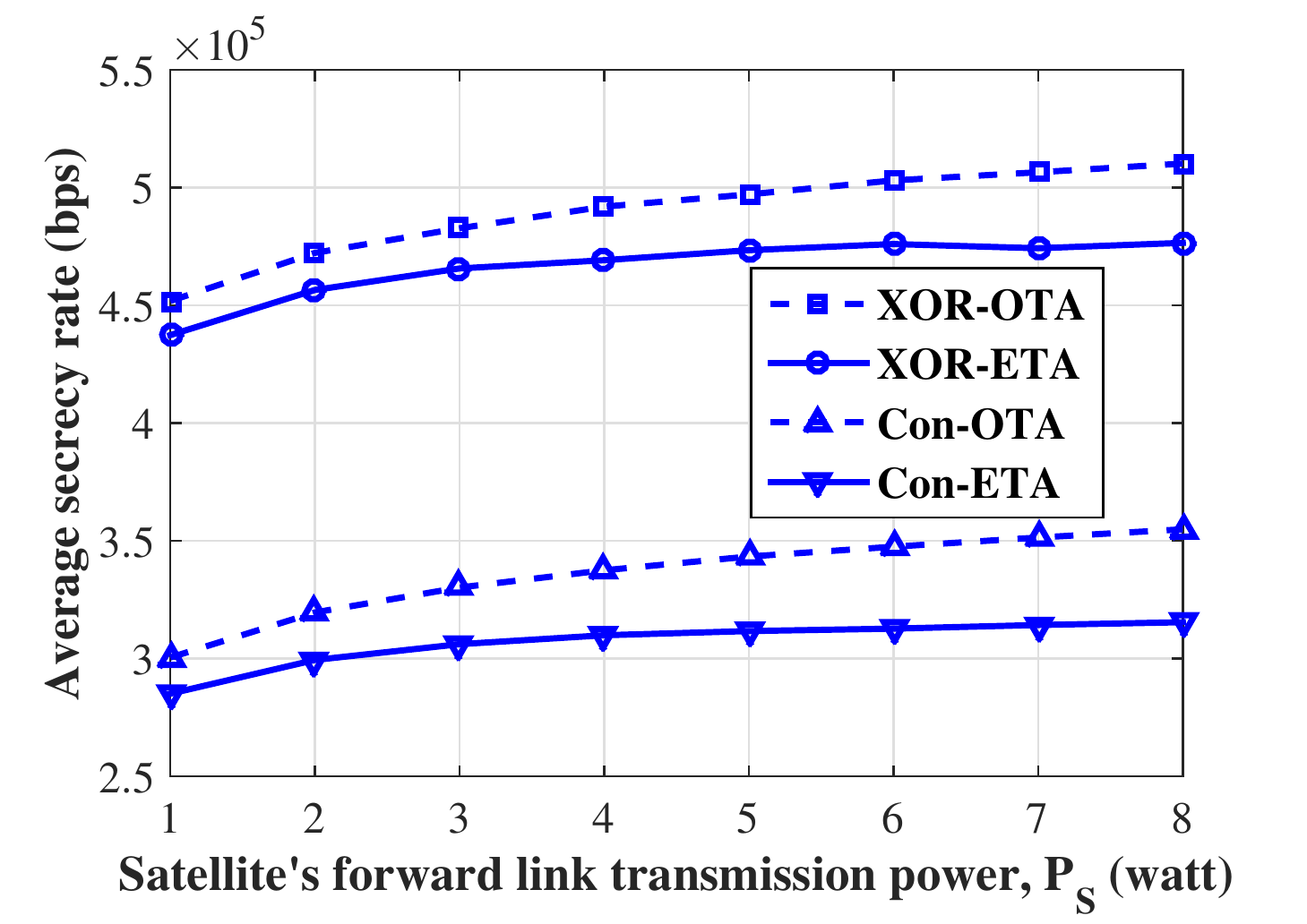}
  \caption{Average sum secrecy rate versus the satellite's forward link transmission power.}
  \label{fig:ASR vs Sat Pow}
\end{figure}
\begin{figure}[]
  \centering
  \includegraphics[width=8.5cm]{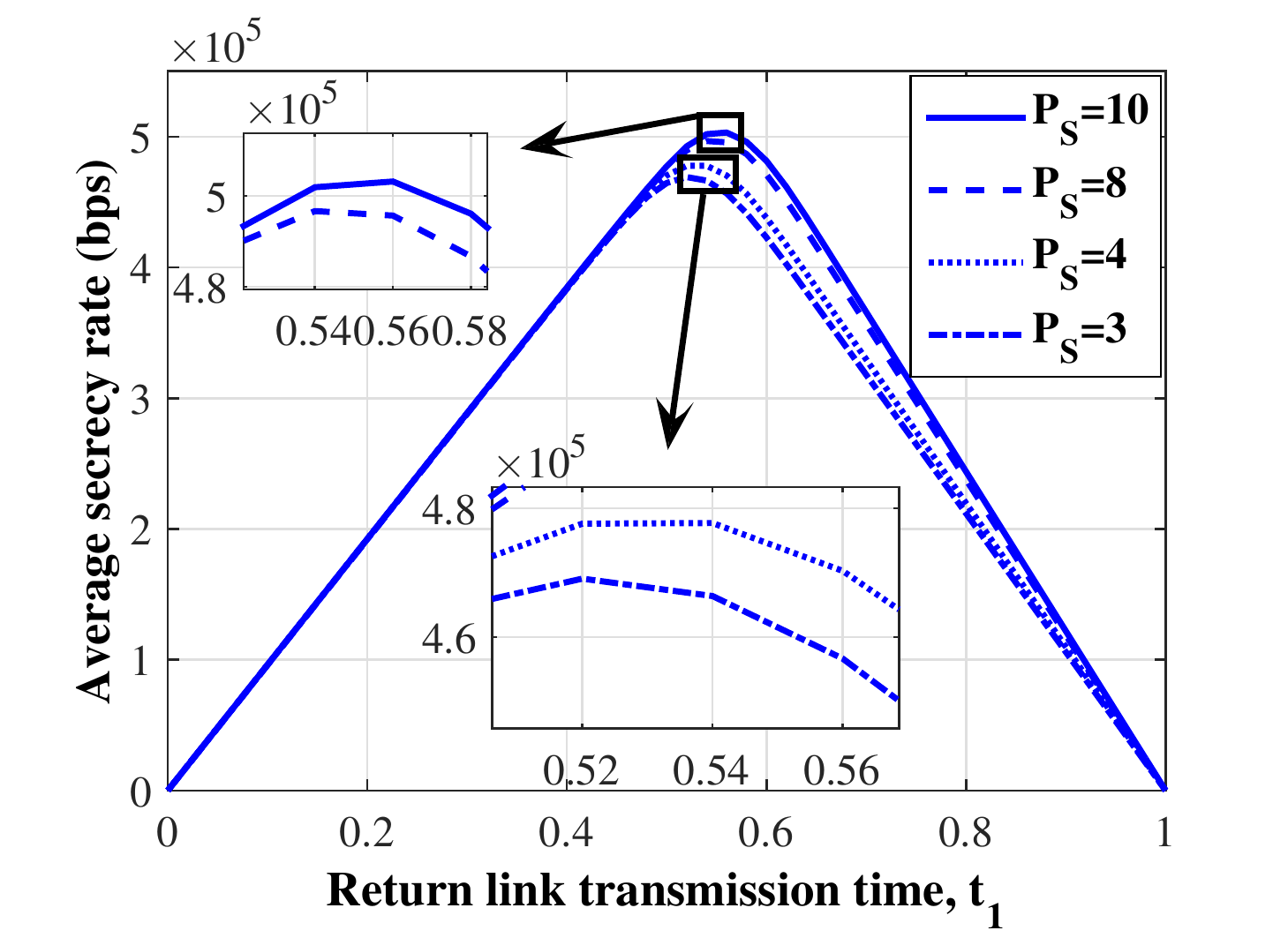}
  \caption{Average sum secrecy rate versus RL time allocation for different satellite's forward link transmission powers.}
  \label{fig:ASR vs uplink time-pow}
\end{figure}
\begin{figure}[]
  \centering
  \includegraphics[width=8.5cm]{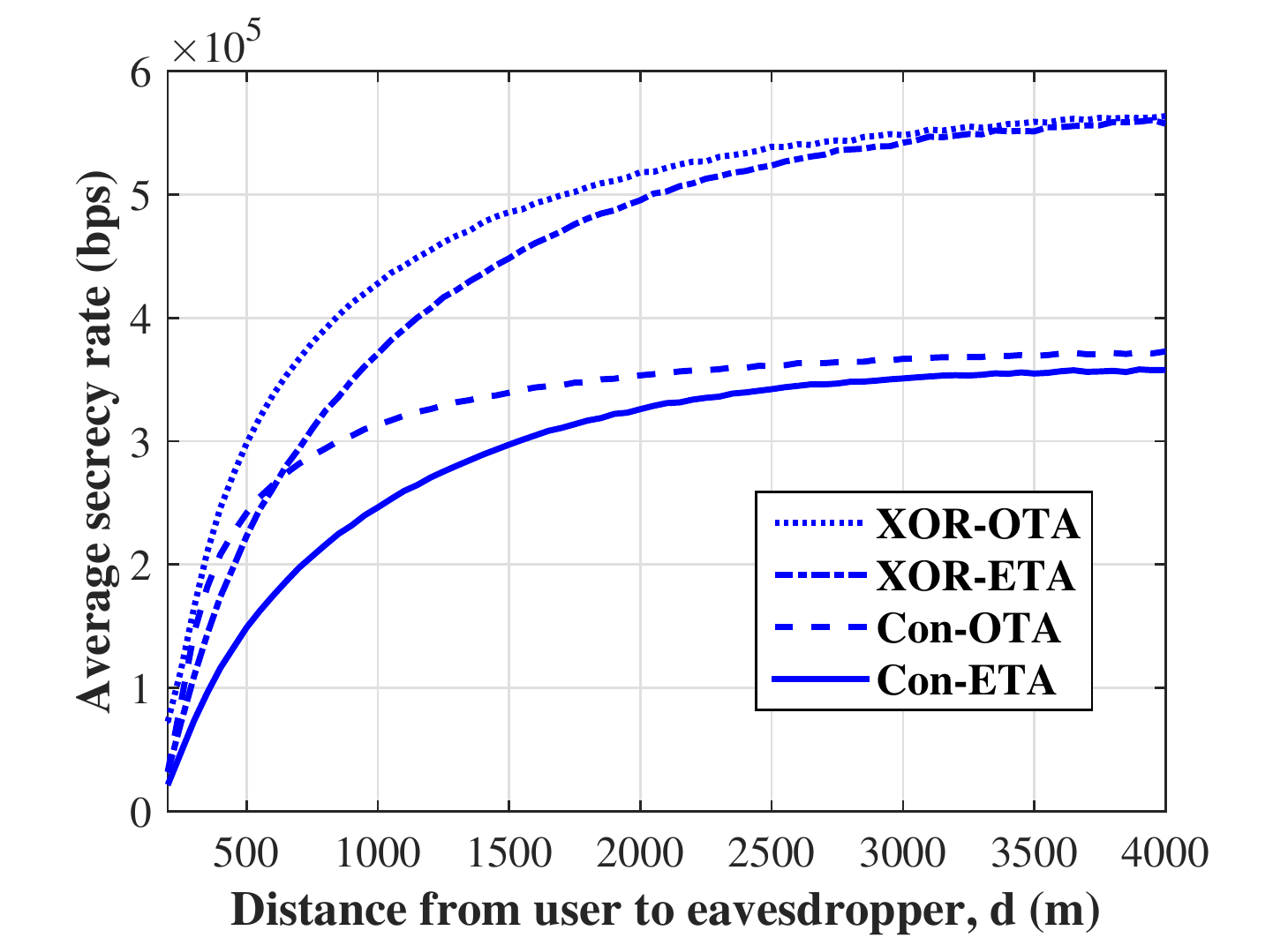}
  \caption{Average sum secrecy rate versus the distance between the user and the eavesdropper for XOR 
	network coding and conventional schemes while equal and optimal time allocation are employed.}
  \label{fig:ASR vs Sat dis}
\end{figure}
\begin{figure}[]
  \centering
  \includegraphics[width=8.5cm]{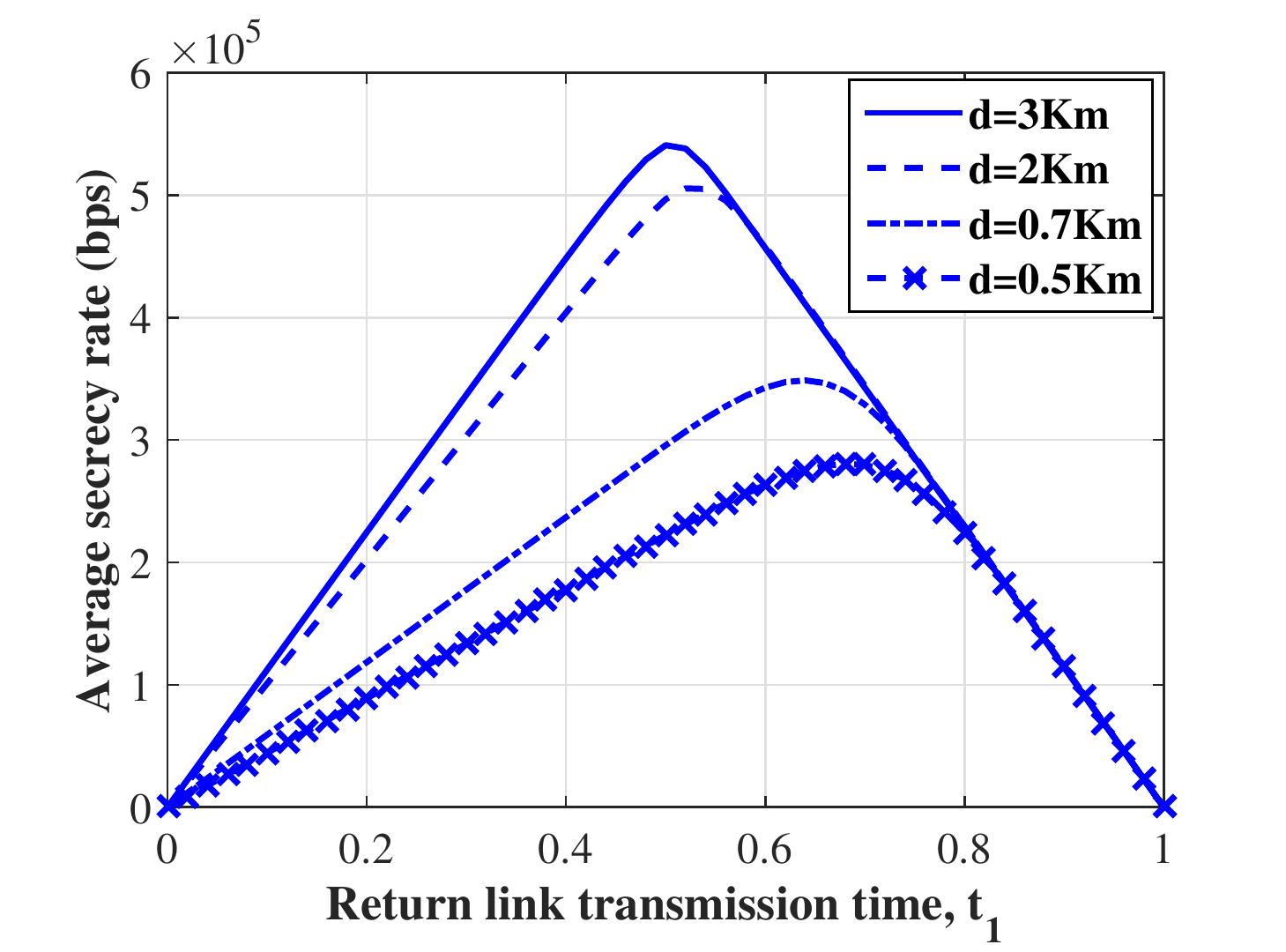}
  \caption{Average sum secrecy rate versus different RL and FL time allocation in XOR network coding scheme
  for different distances between the user and eavesdropper.}
  \label{fig:ASR vs uplink time-d}
\end{figure}

The effect of time allocation    is further illustrated in
Figs.~\ref{fig:ASR vs uplink time-ant} and~\ref{fig:ASR vs uplink
time-con-ant} for the XOR network coding and the
conventional schemes, respectively. It is observed in
Fig.~\ref{fig:ASR vs uplink time-ant} that for different number of
feeds, the average sum secrecy rate first increases, and then then
decreases with the RL time allocation $t_1$. Here, more time is allocated to the 
RL transmission which means that the FL transmission rate is a bottleneck for the end-to-end rate. The time split between the RL and FL 
depends on the number of feeds at the satellite. As the number of feeds increases, the time 
devoted to the FL transmission increases. This shows that the FL acts as a bottleneck for 
the end-to-end communications. The change in the RL
and FL time allocation makes the
channel secrecy rates closer to each other so that the overall
average secrecy rate increases. The optimal time allocation for one
RL slot and two FL slots in the conventional scheme can be seen in
Fig.~\ref{fig:ASR vs uplink time-con-ant}.

The effect of the satellite's FL transmission power on the average secrecy rate is
investigated in Figs.~\ref{fig:ASR vs Sat Pow} and~\ref{fig:ASR vs
uplink time-pow}. In Fig.~\ref{fig:ASR vs Sat Pow}, we see that the
average secrecy rate for the equal time allocation approach in both
schemes starts to saturate as the available power for the FL transmission increases. This can be
explained by the fact that as the available power increases,  the RL
becomes a bottleneck for the end-to-end secrecy rate and hinders the overall
improvement. On the other hand, while performing optimal time
allocation over RL and FL, the average secrecy rate keeps growing 
for both the conventional and the XOR network coding
schemes. It is seen in
Fig.~\ref{fig:ASR vs uplink time-pow} that by increasing the power
at the satellite, more time is allocated to the RL transmission in
order to  balance the RL and FL secrecy rates and sustaining the 
secrecy rate growth. However, after increasing the satellite's power beyond 
a specific point, the effect of the optimal time allocation fades out, and the average 
secrecy rate in the optimal time allocation scheme also saturates 
due to RL being a bottleneck. This fact can be observed in 
Fig.~\ref{fig:ASR vs uplink time-pow}. As the power of the FL transmission increases, less 
time is exchanged between the RL and FL transmission and the average secrecy rate saturates.

The effect of the distance between each user and the corresponding
eavesdropper is investigated in Figs.~\ref{fig:ASR vs Sat dis}
and~\ref{fig:ASR vs uplink time-d}. As  is seen in Fig.~\ref{fig:ASR
vs Sat dis}, the average secrecy rate for equal time allocation in both 
schemes saturates as the distance between the user and
eavesdropper increases. This is because increasing the distance to
the eavesdropper improves the secrecy rate in the RL, leaving
the FL as a performance bottleneck. When the time allocation
is optimized, the average secrecy rate shows notable gain in both schemes. However, after a specific distance, the secrecy rate
for the optimal power allocation also saturates. Increasing the distance to
the eavesdropper increases the secrecy rate for the RL, but this increment is going to be quite small at some point and consequently vanishes. {\rca Consequently, as the distance increases, less time exchange is required between the RL and FL transmission. This fact can be seen in Fig.~\ref{fig:ASR vs uplink time-d}. Due to this limit in the RL secrecy 
rate, the secrecy rate can be improved using optimal time allocation up to a limited distance. Furthermore, as it is observed in Fig.~\ref{fig:ASR vs Sat dis}, the average sum secrecy rate of the XOR network coding saturates in a much longer distance compared to the conventional scheme.} Interestingly, when the user and the eavesdropper are close, the conventional scheme using
the optimal time allocation outperforms the XOR network coding
scheme using equal time allocation. This originates from the fact that {\rca there are more degrees of freedom in terms of optimal time allocation in the conventional scheme compared to the XOR network coding scheme.} Hence, when it comes to picking up a secure protocol, distance plays an important role.

The results in Fig.~\ref{fig:ASR vs uplink time-d} illustrate that as the distance
between the user and the eavesdropper decreases, more time is
allocated to the RL transmission of the XOR network coding scheme in
order to balance the secrecy rates in RL and FL. It is observed that as the distance to the
eavesdropper increases, less change is required in the RL and FL times. This is due to the fact that as 
the distance increases, the improvement rate in the secrecy rate of the RL is reduced and less regulation is required in the transmission times.
\section{Conclusion}\label{sec:con}
Network coding principle has been known to increase the throughput
of bidirectional SATCOM. In this paper, we studied the
use of XOR network coding to improve the sum secrecy rate of
bidirectional SATCOM. The beamforming vector as well as the optimal
time allocation between the RL and the FL were optimized to improve
the secrecy rate.  We compared the sum secrecy rate of the XOR network coding with the conventional scheme without using
network coding regarding realistic system parameters. Our results
demonstrated that the network coding based scheme
outperforms the conventional scheme substantially, especially when
the legitimate users and the eavesdroppers are not close. 
\appendices
{\rca \section{Proof of Theorem~\ref{thm:active}} \label{app:active}
\begin{proof}
In the objective function of problem~\eqref{eqn:Conv Opt 3}, only the second argument of the ``min'' operators, FL secrecy rates, 
include the beamforming vector. Hence, we focus on these terms in our optimization. Using contradiction, we shall show that $\left\| {\bf{w}}_1^\star \right\|^2 = \beta {P_S}$ and 
$\left\| {\bf{w}}_2^\star \right\|^2 = \left( {1 - \beta }
\right){P_S}$ must hold for the optimal solutions ${\bf{w}}_1^\star$ and ${\bf{w}}_2^\star$. Assume that ${\bf{w}}_1^{\star}$ and ${\bf{w}}_2^{\star}$ are the optimal
solutions to~\eqref{eqn:Conv Opt 3} and satisfy $\left\| \qw_1 \right\|^2 < \beta {P_S}$ and 
$\left\| \qw_2 \right\|^2 < \left( {1 - \beta }
\right){P_S}$, then there exist constants $\alpha_1>1$ and $\alpha_2>1$ that satisfy ${\left\| {{{\widehat {\bf{w}}}_1^{\star}}} \right\|^2} = \beta {P_S}$ 
and ${\left\| {{{\widehat {\bf{w}}}_2^{\star}}} \right\|^2} = \left( {1 - \beta }\right){P_S}$ where 
${{{\widehat {\bf{w}}}_1^{\star}}} = \alpha_1  {\qw_1^ \star }$ and ${{{\widehat {\bf{w}}}_2^{\star}}} = \alpha_2  {\qw_2^ \star }$. 
Replacing ${\qw_1^ \star }$ by ${\widehat {\bf{w}}}_1^{\star}$ and ${\qw_2^ \star }$ by ${\widehat {\bf{w}}}_2^{\star}$ 
in the FL secrecy rates of the objective in~\eqref{eqn:Conv Opt 3}, we get
\begin{align}
&  {f_1}\left( {{\alpha _1}} \right) = {t_2}\log \left( {\frac{{\sigma _{{E_2}}^2}}{{\sigma _{{U_2}}^2}}\frac{{\sigma _{{U_2}}^2 + \alpha_1^2|{\bf{h}}_{S,{U_2}}^T{\bf{w}}_1^ \star {|^2}}}{{\sigma _{{E_2}}^2 + \alpha_1^2 |{\bf{h}}_{S,{E_2}}^T{\bf{w}}_1^ \star {|^2}}}} \right),
\nonumber\\
& {f_2}\left( {{\alpha _2}} \right) = {t_3}\log \left( {\frac{{\sigma _{{E_1}}^2}}{{\sigma _{{U_1}}^2}}\frac{{\sigma _{{U_1}}^2 + \alpha_2^2|{\bf{h}}_{S,{U_1}}^T{\bf{w}}_2^ \star {|^2}}}{{\sigma _{{E_1}}^2 + \alpha_2^2|{\bf{h}}_{S,{E_1}}^T{\bf{w}}_2^ \star {|^2}}}} \right).
\label{eqn:XOR Opt Simp act}
\end{align}
Also, we assume that in the RL and FL of each user the secrecy rate is nonzero which translates into
\begin{align}
\sigma _{{E_2}}^2\left( {\sigma _{{U_2}}^2 + |{\bf{h}}_{S,{U_2}}^T{{\bf{w}}_1}{|^2}} \right) > \sigma _{{U_2}}^2\left( {\sigma _{{E_2}}^2 + |{\bf{h}}_{S,{E_2}}^T{{\bf{w}}_1}{|^2}} \right), 
\exists \qw_1, 
\label{eqn:Pos1}
\\
\sigma _{{E_1}}^2\left( {\sigma _{{U_1}}^2 + |{\bf{h}}_{S,{U_1}}^T{{\bf{w}}_2}{|^2}} \right) > \sigma _{{U_1}}^2\left( {\sigma _{{E_1}}^2 + |{\bf{h}}_{S,{E_1}}^T{{\bf{w}}_2}{|^2}} \right), 
\exists \qw_2. 
\label{eqn:Pos2}
\end{align}
According to the conditions in~\eqref{eqn:Pos1} and~\eqref{eqn:Pos2}, we can see that ${f_1}(\alpha)$ and ${f_2}(\alpha)$ 
are monotonically increasing functions in the parameters $\alpha_1$ and $\alpha_2$. This contradicts 
that ${\bf{w}}_1^ \star$ and ${\bf{w}}_2^ \star$ are the optimal solutions. Since adjusting the RL 
and FLs transmission time balances the RL and FL secrecy rates, the RL bottleneck does not limit the FL secrecy rate increment. 
Hence, the power constraint should be active. This completes the proof.
\end{proof}}

\begin{IEEEbiography}
    [{\includegraphics[width=1in,height=1.25in,clip,keepaspectratio]{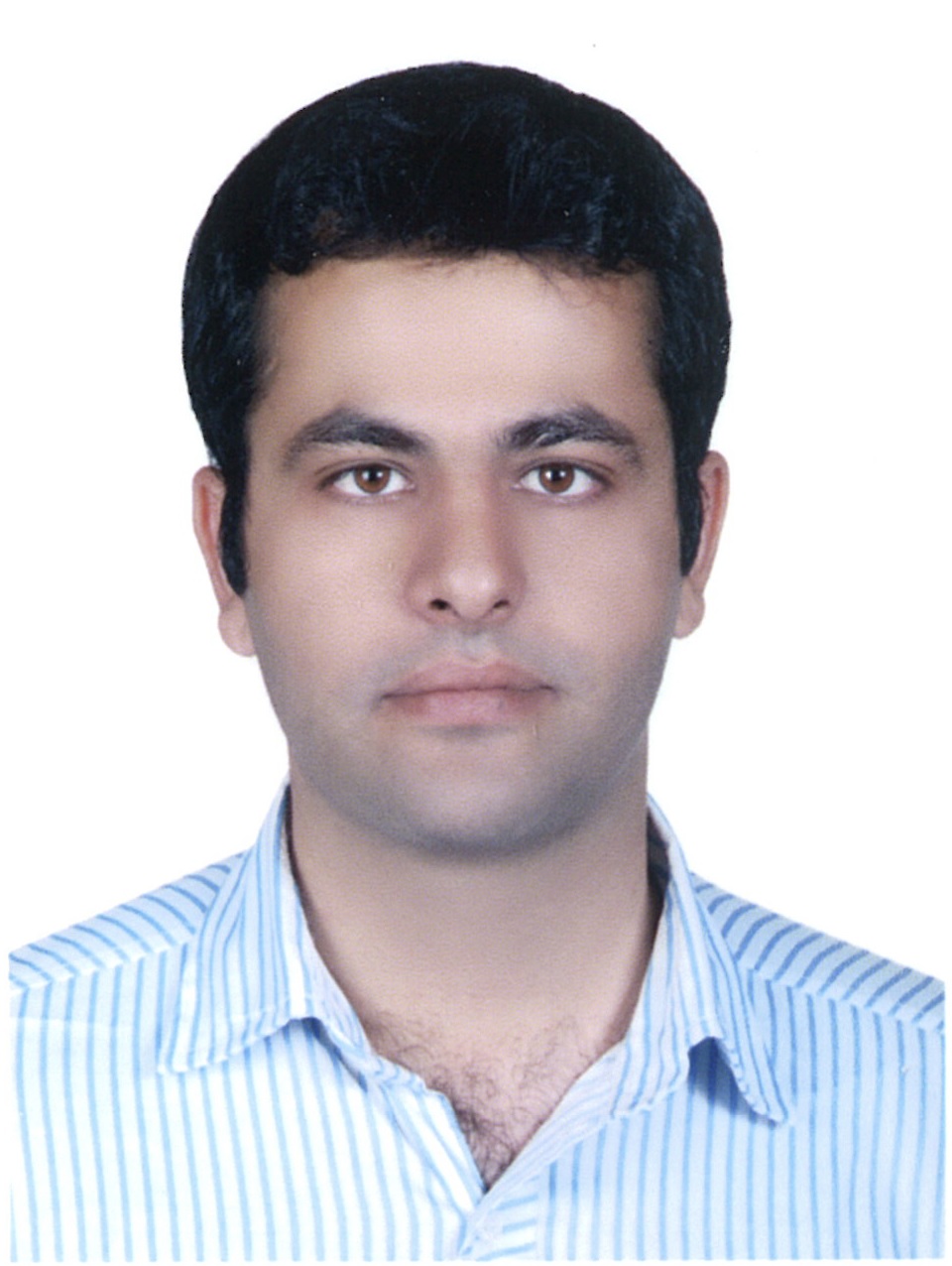}}]{Ashkan Kalantari}
Ashkan Kalantari (AK) was born in Iran. He received his BSc and MSc degrees from K. N. Toosi 
University of Technology, Tehran, Iran in 2009 and 2012, respectively. He is currently working toward the 
Ph.D. degree with the research group SIGCOM in the Interdisciplinary Centre for Security, Reliability and 
Trust (SnT), University of Luxembourg. His research interest is physical layer security in wireless and satellite communications.
\end{IEEEbiography}
\begin{IEEEbiography}
    [{\includegraphics[width=1in,height=1.25in,clip,keepaspectratio]{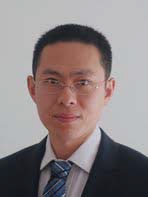}}]{Gan Zheng} 
		(S'05-M'09-SM'12) 
is currently a Lecturer in School of Computer Science
and Electronic Engineering, University of Essex, UK. 
He received the B. E. and the M. E. from Tianjin University, Tianjin, China, in 2002 and 2004,
respectively, both in Electronic and Information Engineering, and the PhD degree in Electrical 
and Electronic Engineering from The University of Hong Kong, Hong Kong, in 2008. Before he joined University of Essex, he worked as a Research 
Associate at University College London, UK, and University of Luxembourg,
Luxembourg. His research interests include cooperative communications, cognitive radio, physical-layer security, full-duplex radio and 
energy harvesting. He is the first recipient for the 2013 IEEE Signal Processing Letters Best Paper Award.
\end{IEEEbiography}
\begin{IEEEbiography}
    [{\includegraphics[width=1in,height=1.25in,clip,keepaspectratio]{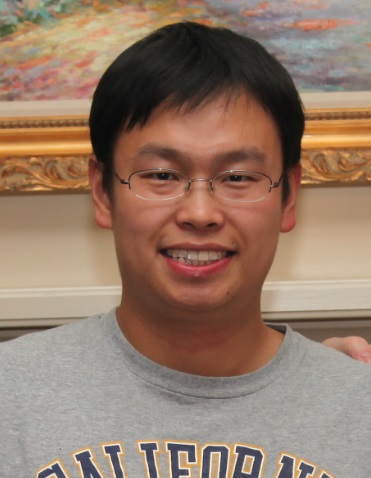}}]{Zhen GAO} 
received his BS, MS and PhD degree in Electrical and Information Engineering from Tianjin University, China, in 2005, 2007 and 2011, respectively. From 2008.10 to 2010.11, he was a visiting scholar in GeorgiaTech, working on the design and implementation for OFDM based cooperative communication. From 2011.7 to 2014.11, he was an assistant researcher in the Wireless and Mobile Communication Research Center in Tsinghua University, China. He is currently an Associate Professor in Tianjin University. His focus is on mobile satellite communications, fault-tolerant signal processing and wireless communication system.
\end{IEEEbiography}	
\begin{IEEEbiography}
    [{\includegraphics[width=1in,height=1.25in,clip,keepaspectratio]{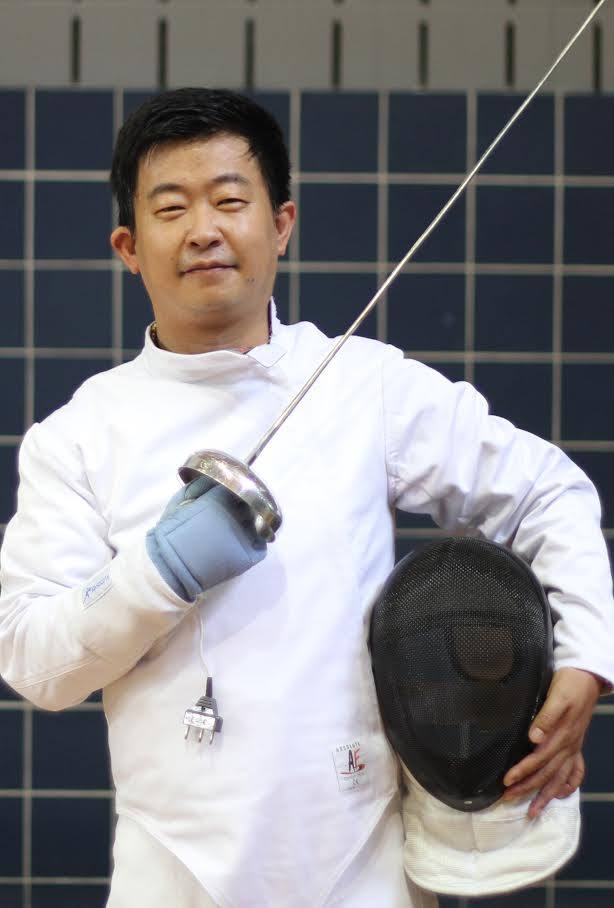}}]{Zhu Han}
 (S’01–M’04-SM’09-F’14) received the B.S. degree in electronic engineering from Tsinghua University, in 1997, and the M.S. and Ph.D. degrees in electrical engineering from the University of Maryland, College Park, in 1999 and 2003, respectively. 

From 2000 to 2002, he was an R\&D Engineer of JDSU, Germantown, Maryland. From 2003 to 2006, he was a Research Associate at the University of Maryland. From 2006 to 2008, he was an assistant professor in Boise State University, Idaho. Currently, he is an Associate Professor in Electrical and Computer Engineering Department at the University of Houston, Texas. His research interests include wireless resource allocation and management, wireless communications and networking, game theory, wireless multimedia, security, and smart grid communication. Dr. Han is an Associate Editor of IEEE Transactions on Wireless Communications since 2010. Dr. Han is the winner of IEEE Fred W. Ellersick Prize 2011. Dr. Han is an NSF CAREER award recipient 2010. Dr. Han is IEEE Distinguished lecturer since 2015.  
\end{IEEEbiography}
\begin{IEEEbiography}
    [{\includegraphics[width=1in,height=1.25in,clip,keepaspectratio]{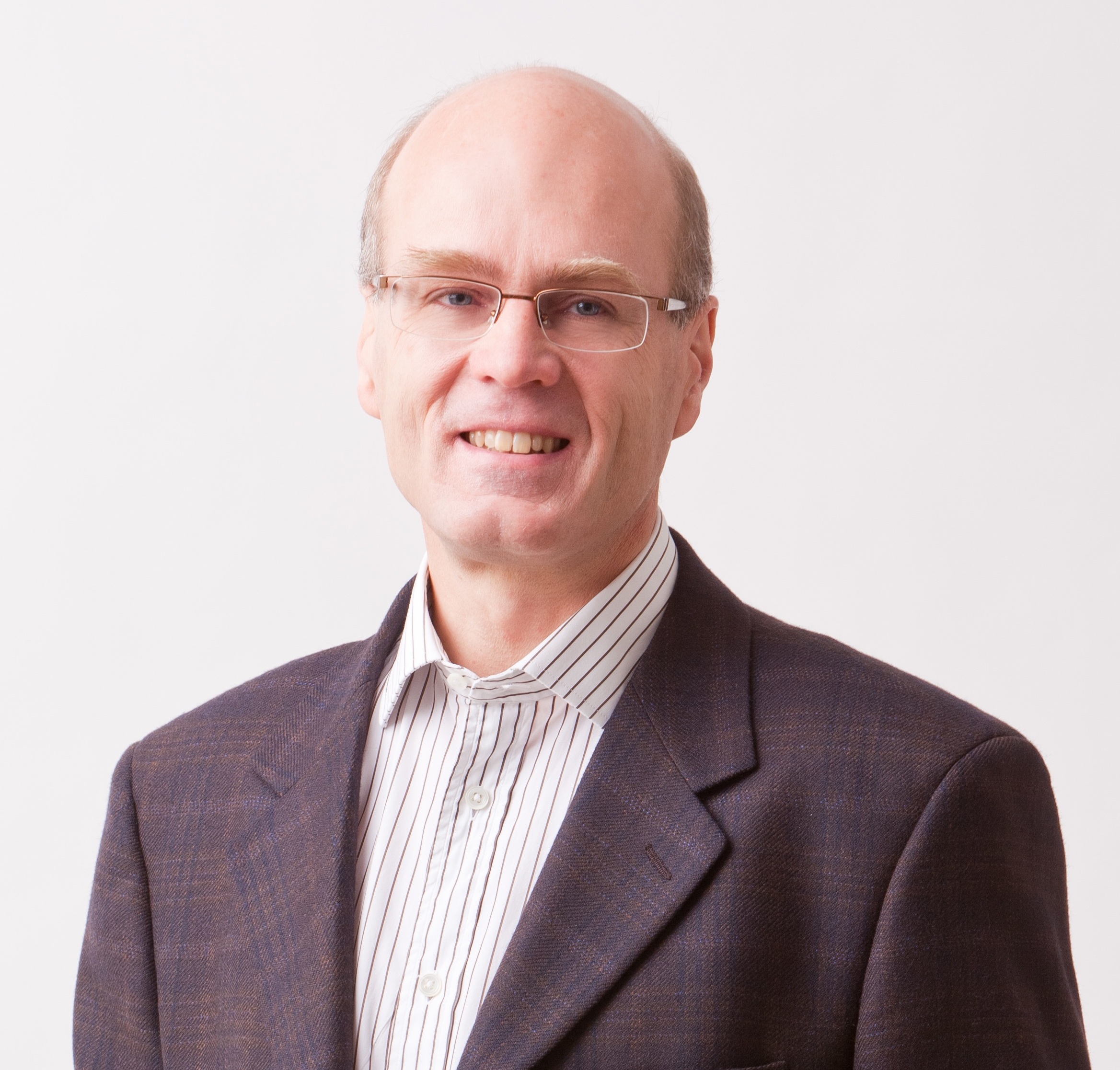}}]{Bj\"{o}rn Ottersten}
was born in Stockholm, Sweden, 1961. He received the M.S. degree in electrical engineering and applied physics from Link\"{o}ping University, Link\"{o}ping, Sweden, in 1986. In 1989 he received the Ph.D. degree in electrical engineering from Stanford University, Stanford, CA. Dr. Ottersten has held research positions at the Department of Electrical Engineering, Link\"{o}ping University, the Information Systems Laboratory, Stanford University, the Katholieke Universiteit Leuven, Leuven, and the University of Luxembourg. During 96/97 Dr. Ottersten was Director of Research at ArrayComm Inc, a start-up in San Jose, California based on Ottersten’s patented technology. He has co-authored journal papers that received the IEEE Signal Processing Society Best Paper Award in 1993, 2001, 2006, and 2013 and 3 IEEE conference papers receiving Best Paper Awards. In 1991 he was appointed Professor of Signal Processing at the Royal Institute of Technology (KTH), Stockholm. From 1992 to 2004 he was head of the department for Signals, Sensors, and Systems at KTH and from 2004 to 2008 he was dean of the School of Electrical Engineering at KTH. Currently, Dr. Ottersten is Director for the Interdisciplinary Centre for Security, Reliability and Trust at the University of Luxembourg. Dr. Ottersten is a board member of the Swedish Research Council and as Digital Champion of Luxembourg, he acts as an adviser to the European Commission. Dr. Ottersten has served as Associate Editor for the IEEE Transactions on Signal Processing and on the editorial board of IEEE Signal Processing Magazine. He is currently editor in chief of EURASIP Signal Processing Journal and a member of the editorial boards of EURASIP Journal of Applied Signal Processing and Foundations and Trends in Signal Processing. Dr. Ottersten is a Fellow of the IEEE and EURASIP and a member of the IEEE Signal Processing Society Board of Governors. In 2011 he received the IEEE Signal Processing Society Technical Achievement Award. He is a first recipient of the European Research Council advanced research grant. His research interests include security and trust, reliable wireless communications, and statistical signal processing.
\end{IEEEbiography}
\end{document}